\def\beq{\begin{equation}}
\def\eeq{\end{equation}}
\def\beqa{\begin{eqnarray}}
\def\eeqa{\end{eqnarray}}
\def\beqan{\begin{eqnarray*}}
\def\eeqan{\end{eqnarray*}}
\newcommand{\citep}[1]{\cite{#1}}
\def\beq{\begin{equation}}
\def\eeq{\end{equation}}
\def\beqa{\begin{eqnarray}}
\def\eeqa{\end{eqnarray}}
\def\beqan{\begin{eqnarray*}}
\def\eeqan{\end{eqnarray*}}
\def\C{{\mathbb{C}}}
\DeclareMathOperator*{\argmax}{arg\,max}
\newtheorem{proposition}{Proposition}
\newtheorem{theorem}{Theorem}
\newtheorem{lemma}{Lemma}
\theoremstyle{definition}
\def\zhat{\widehat{z}}
\def\Exp{\mathbb{E}}
\newcommand{\abf}{\mathbf{a}}
\newcommand{\gbf}{\mathbf{g}}
\newcommand{\rbf}{\mathbf{r}}
\newcommand{\sbf}{\mathbf{s}}
\newcommand{\ubf}{\mathbf{u}}
\newcommand{\wbf}{\mathbf{w}}
\newcommand{\xbf}{\mathbf{x}}
\newcommand{\ybf}{\mathbf{y}}
\newcommand{\zbf}{\mathbf{z}}
\newcommand{\zbfhat}{\widehat{\mathbf{z}}}
\newcommand{\Fbf}{\mathbf{F}}
\newcommand{\Gbf}{\mathbf{G}}
\newcommand{\Qbf}{\mathbf{Q}}
\newcommand{\Vbf}{\mathbf{V}}
\newcommand{\Zhat}{\widehat{Z}}
\def\alphabf{{\boldsymbol \alpha}}
\def\deltabf{{\boldsymbol \delta}}
\def\nubf{{\boldsymbol \nu}}
\def\nubf{{\boldsymbol \nu}}
\def\xibf{{\boldsymbol \xi}}
\newcommand{\phibf}{{\bm{\phi}}}
\newcommand{\indic}[1]{\mathbbm{1}_{ \{ {#1} \} }}
\newcommand{\herm}{^{\text{\sf H}}}
\begin{document}
\bstctlcite{IEEEexample:BSTcontrol}
\title{Capacity Bounds for 
Communication Systems with Quantization and Spectral Constraints}

\author{
\IEEEauthorblockN{
Sourjya Dutta, 
Abbas Khalili, Elza Erkip, Sundeep Rangan} 

\IEEEauthorblockA{Dept. of Electrical and Computer Engineering, \\ Tandon School of Engineering, New York University, Brooklyn, NY, USA}
}

\iftoggle{conference}{}{
\thanks{
The work  supported in part by
NSF grants  1302336,  1564142,  1547332, and 1824434,  NIST, SRC, and the industrial affiliates of NYU WIRELESS.
}

}
\maketitle

\begin{abstract}
Low-resolution digital-to-analog 
and analog-to-digital converters (DACs and ADCs) 
have attracted considerable attention 
in efforts to reduce power consumption in millimeter wave (mmWave) 
and massive MIMO systems.
This paper presents an information-theoretic analysis
with capacity bounds for  classes of linear transceivers with
quantization.  The transmitter modulates
symbols via a unitary transform followed by a DAC
and the receiver employs an ADC
followed by the inverse unitary transform.
If the unitary transform is set to an FFT matrix,
the model naturally captures filtering
and spectral constraints 
which are essential to model in any practical transceiver. In particular, this model allows studying the impact of quantization on out-of-band emission constraints. 
In the limit of a large random unitary transform, it is shown that the effect
of quantization can be precisely described via an additive Gaussian noise model.
This model in turn leads to simple and intuitive expressions for the 
power spectrum of the transmitted signal and a lower bound to the capacity with quantization.  
Comparison with non-quantized capacity and a capacity upper bound that does not make linearity assumptions suggests that while low resolution quantization
has minimal impact on the achievable rate at typical  parameters in 5G systems today, satisfying out-of-band emissions are potentially much more  of a challenge.  
\end{abstract}

\begin{IEEEkeywords}
Quantization, millimeter wave, analog-to-digital conversion, digital-to-analog conversion, out of band emission.
\end{IEEEkeywords}

\section{Introduction}

All digital communications systems rely on digital-analog and
analog-digital converters (ADCs and DACs).
In recent years, there has been considerable interest
in systems with so-called \emph{low resolution} DACs and ADCs where
the number of bits is very small (typically 3-4 bits in I and Q).
These architectures have attracted particular attention in
the context of energy-efficient approaches
for next-generation millimeter wave (mmWave) and massive MIMO systems
\cite{abbas2017millimeter,zhang2018low,abdelghany2018towards,abbas2019highsnr,abbas2019MT,Abbas2020Thr,barati_initialaccess,singh2009limits,koch2013low,nossek2006capacity,orhan2015low,dutta2019case, mo2015capacity,rini2017general,mezghani2012capacity,Studer2016,jacobsson2017throughput,mollen2016uplink,mo2017hybrid}.
In particular, mmWave systems rely on communication across
wide bandwidths with large numbers of antennas
\cite{rappaport2014millimeter,rangan2014millimeter}.
Power consumption thus becomes a key issue, particularly
in so-called fully digital architectures where signals from all antennas
are digitized for fast beam-tracking, initial access and spatial multiplexing
\cite{barati_initialaccess,abbas2017millimeter,zhang2018low,abdelghany2018towards,dutta2019case}.

At low resolutions, it is critical to evaluate the effect of quantization accurately, and there
is now a large body of work on characterizing the capacity of such systems \cite{abbasISIT2018,mo2015capacity,rini2017general,mezghani2012capacity,singh2009limits,koch2013low,nossek2006capacity,Studer2016,jacobsson2017throughput,mollen2016uplink,mo2017hybrid}. The most common model is to approximate the quantizer in either the
DAC or ADC via an additive Gaussian noise (AGN) model  \cite{gersho2012vector,fletcher2007robust}. There are several works that provide rigorous analysis of the AGN model under variety of assumptions such as the high rate regime or dithered quantization \cite{gersho2012vector,marco2005validity,gray1993dithered,zamir1996LQN,derpich2008quadratic}.
The AGN model has also been used in the analysis of low resolution mmWave systems
\cite{mo2015capacity,rini2017general,mezghani2012capacity,Studer2016,jacobsson2017throughput,mollen2016uplink,mo2017hybrid}. 
In such systems, while the AGN and other 
Gaussian noise predictions match simulations, its use has not been rigorously justified.

This paper presents a simple, but rigorous method,
for analyzing a large class of linear communication
systems. Specifically, we analyze a general
transmitter and receiver with quantization in conjunction with linear modulation
and demodulation as shown in Fig.~\ref{fig:fd_txrx}.
A transmitter encodes data through an unitary transform $\Vbf \herm$ prior
to the DAC.  The DAC is modeled by a function $\Qbf_{\rm tx}(\cdot)$.
The continuous-valued signal $\xbf$ is passed through a memoryless
channel $\Fbf(\cdot)$.  The receiver then uses an ADC $\Qbf_{\rm rx}(\cdot)$ followed by
an inverse transform $\Vbf$ to recover the transmitted symbols.

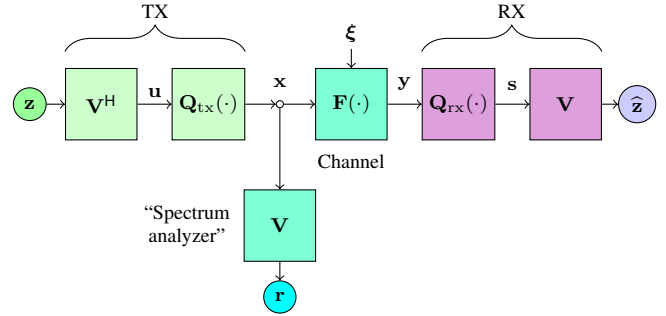
\begin{figure}
\begin{tikzpicture}[scale=0.95,every text node part/.style={align=center}, every node/.append style={transform shape}]
    \footnotesize
    
    \node [draw,circle,fill=green!40, node distance=1cm]
        (z) {$\zbf$};
    \node [draw,right of=z,fill=green!20, minimum size=1cm,
        node distance=1 cm] (VH) {$\Vbf\herm$};
    \node [draw,right of=VH,fill=green!20, minimum size=1cm,
        node distance=1.5 cm] (Q) {$\Qbf_{\rm tx}(\cdot)$};
    \node [draw,circle,right of=Q,inner sep=1] (circ1) {};
    \draw [->] (z) --  (VH);
    \draw [->] (VH) --  node[above] {$\ubf$} (Q);
    \draw [->] (Q) -- (circ1);
    \node [draw=none,above of=circ1,yshift=-0.7cm] ()  {$\xbf$};
    \draw [decorate,decoration={brace,amplitude=10pt,raise=0.7cm}] 
        (VH.west) -- node[above,yshift=1.1cm] {TX} (Q.east) ;

    \node [draw,right of=circ1,fill=Aquamarine, minimum size=1cm,
        node distance=1 cm] (chan) {$\Fbf(\cdot)$};
    \draw [->] (circ1) -- (chan);
    \node [above of=chan]  (xi) {$\xibf$};
    \draw [->] (xi) -- (chan.north);
    \node [below of=chan,yshift=0.2cm]  () {Channel};
    
    \node [draw,below of=circ1,fill=Aquamarine, minimum size=1cm, node distance=1 cm, yshift=-0.7cm] (Vsa) {$\Vbf$};
    \node [draw,below of=Vsa,circle,fill=Aqua, node distance=1cm]
        (r) {$\rbf$};
    \draw [->] (circ1) -- (Vsa);
    \draw [->] (Vsa) -- (r);
    \node [left of=Vsa,xshift=-0.3cm]  () {``Spectrum\\ analyzer"};

    \node [draw,right of=chan,fill=Plum, minimum size=1cm,
        node distance=1.5 cm] (Qrx) {$\Qbf_{\rm rx}(\cdot)$};
    \node [draw,right of=Qrx,fill=Plum, minimum size=1cm,
        node distance=1.5 cm] (V) {$\Vbf$};
    \node [draw,circle,right of=V,fill=blue!20] 
        (zhat) {$\widehat{\zbf}$};
    \draw [->] (chan) --  node[above,yshift=0.1cm] {$\ybf$} (Qrx);
    \draw [->] (Qrx) -- node[above,yshift=0.1cm] {$\sbf$} (V);
    \draw [->] (V) --   (zhat);
    \draw [decorate,decoration={brace,amplitude=10pt,raise=0.7cm}] 
        (Qrx.west) -- node[above,yshift=1.1cm] {RX} (V.east) ; 

\end{tikzpicture}

\caption{System model with transform modulation and demodulation
with quantization at both the transmitter and receiver.
The transform modulation is modeled as a multiplication by
$\Vbf\herm$ prior to quantization at the transmitter, while a  spectrum
analyzer and receiver employ the inverse transform $\Vbf$.}
\label{fig:fd_txrx}

\end{figure}

If $\Vbf$ were an FFT-matrix, then 
the model can be considered as a simplified version of
a frequency-domain filtering.
Also, the spectrum of the transmitted signal can be modeled through
the transform $\rbf = \Vbf\xbf$.
We find an achievable rate for this system and  the power spectral density of the transmitted  signal
as a function of the DAC and ADC functions in a certain large random
limit where $\Vbf \in \C^{N \times N}$ is selected
uniformly among the unitary matrices and $N \rightarrow \infty$. We also find a capacity upper bound for a given transmitted power spectral density considering the DAC and the ADC, but not limiting transmit/receive processing to linear operations.  
Our key results are as follows:
\begin{itemize}
    \item \emph{Rigorous AGN model:} 
    We show that the effect of quantization can be
    precisely modeled as additive, independent Gaussian noise.
    This result makes the AGN analysis of \cite{gersho2012vector} in the setting of Fig.~\ref{fig:fd_txrx}
    rigorous, even in the low rate regime.
    
    \item \emph{Predictions on the rate and power spectrum:}
    The AGN model provides  asymptotically
    exact, simple and intuitive expressions 
    for spectrum of the transmitted 
    signal and a lower bound for the capacity of the quantized channel.
    
    \item \emph{Sampling rate and spectral modeling:}
    Many prior information theoretic analyses of low-resolution communication 
    systems assume that the symbol rate equals the sample rate (see, for example, \cite{singh2009limits,mo2015capacity}).
    However, almost all practical transceivers use a sampling rate higher than the signal bandwidth
    to reduce the filtering requirements in the analog domain.  Oversampling is also
    needed in systems with variable bandwidths where sub-channels are selected digitally
    (see Sec. \ref{sec:num_res}  for an example based on 5G New Radio standard \cite{3GPP38300}).
    Previous works accounting for oversampling consider very specific up-sampling methods
    \cite{krone2010fading}.  In contrast, our methods enable exact calculations of the power spectrum
    and bounds on capacity under general spectral mask constraints.
    
    \item \emph{Implications for fully-digital architectures for 5G New Radio:} 
    Several prior simulation studies have predicted that  
    with 3 -- 4 bits, the loss from quantization in achievable rate
    is minimal for data and control plane operations in most 5G cellular use cases
    \cite{barati_initialaccess,abbas2017millimeter,zhang2018low,abdelghany2018towards,dutta2019case,Studer2016,jacobsson2017throughput,mollen2016uplink,mo2017hybrid}.
    Our analysis provides a rigorous confirmation of this minimal loss in achievable rate.
    However, we also show that simple linear modulation results in a hard limit on the
    degree to which the 
    out-of-band (OOB) noise  can be suppressed.  This OOB noise 
    is, in fact, much more of an issue that the rate loss
    at most practical parameter values in 5G systems
    today, particularly in licensed spectrum deployments
    where adjacent carrier leakage is strictly limited.
    
    \item \emph{Upper bounds on OOB suppression for
    any transmitter}:  The high OOB levels with
    the simple linear modulator raises the
    question if there are any transmitter (possibly
    non-linear) that can provide greater OOB
    suppression.  Interestingly, our capacity upper bound for a given power spectral density closely matches
    the achievable rate by the linear transform 
    transmitter in some regime, but shows possibility for
    greater OOB suppression in other regimes.
\end{itemize}
A full version of this paper can be found in
\cite{dutta2020quantization-arxiv} that includes
all proofs.

\section{System Model}
\label{sec:model}

\subsection{Transceiver with Transform Modulation and Demodulation}

We consider the general transceiver system with quantization and transform modulation 
and demodulation shown 
in Fig.~\ref{fig:fd_txrx}.
The transmitter constructs a vector of $N$
symbols $\zbf=(z_0,\ldots,z_{N-1})$ which are modulated 
as $\ubf=\Vbf\herm \zbf$ where $\Vbf \in \C^{N \times N}$ is some 
unitary matrix.  The transformed values are quantized to result 
in a transmitted
vector $\xbf = \Qbf_{\rm tx}(\ubf)=\Qbf_{\rm tx}(\Vbf\herm \zbf)$, where $\Qbf_{\rm tx}(\cdot)$
models the DAC.  
If $\Vbf$ were an FFT matrix,
we could consider the symbols $\zbf$ as the values of 
the transmitted signal in frequency domain and $\ubf$ the pre-quantized
values in time-domain.  The modulation can thus be regarded as a simplified
version of OFDM (where we ignore the cyclic prefix).
In addition, if we zero-pad the input frequency-domain 
symbols $\zbf$, the transformed vector $\ubf =\Vbf\herm\zbf$
can be seen as an linearly up-sampled version of $\zbf$.  

The transmitted time-domain symbols are passed through a general channel of the form,
\beq \label{eq:chan}
    \ybf = \Fbf(\xbf,\xibf),
\eeq
where $\Fbf(\cdot)$ is some mapping and $\xibf$ is noise independent of the channel input $\xbf$.  Most commonly,
we will be interested in the AWGN case, $\ybf = h\xbf + \xibf$, where $h$
is the channel gain.  The channel \eqref{eq:chan} can also model certain
non-linearites in the RF front-end \cite{abdelghany2018towards}.
The receiver first passes the signal through an ADC $\Qbf_{\rm rx}(\ybf)$
and then performs the inverse transform operation to obtain $\zbfhat = \Vbf\Qbf_{\rm rx}(\ybf)$.

\subsection{Spectrum and Capacity}
\iftoggle{conference}{}{
We are interested in estimating the effect of quantization on
two key quantities: 
the frequency-domain \emph{power spectrum} and the \emph{capacity}.   }

To model the spectrum, let $\rbf = \Vbf\xbf$ which is the transform of the transmitted signal $\xbf$.
The component $|r_k|^2$ can be regarded as the energy of the signal at frequency $k, k=0, \ldots, N-1$.
We assume the frequency is divided into $M$ sub-bands
and let $a_k \in \{1,\ldots,M\}$ be the variable that indicates
which sub-band frequency $k$ belongs to.
We call $\abf=(a_0,\ldots,a_{N-1})$ the \emph{sub-band
selection vector} and let,
\beq \label{eq:delm}
    \delta_m(\abf) := \frac{1}{N} \sum_{k=0}^{N-1} \indic{a_k = m},
\eeq
which represents the fraction of the frequency components in
sub-band $m$.  
\iftoggle{conference}{We will call $\delta_m$ the \emph{bandwidth fraction}
for sub-band $m$.}{} We also define,
\beq \label{eq:sm_nolim}
    \phi_m(\rbf) := \frac{1}{N} \sum_{k=0}^{N-1} \indic{a_k = m}|r_k|^2,
\eeq
which represents the energy per sample in sub-band $m$.

An achievable rate for the system
can be computed by fixing some distribution on $\zbf$
and computing the mutual information $I(\zbf; \zbfhat)$
between the transmitted vectors $\zbf$ and received 
frequency-domain vectors, $\zbfhat$.
For the input distribution, we will use an independent complex Gaussian in
each frequency.  Specifically, we will assume the components
$z_k$ are independent with,
\beq \label{eq:zmix}
    z_k \sim {\mathcal CN}(0,P_m)  \mbox{ when }
    a_k = m,
\eeq
where $P_m$ is the symbol energy on any component 
in sub-band $m$.  The average per symbol energy is,
\beq \label{eq:Pbar}
    \overline{P} = \frac{1}{N} \Exp\|\zbf\|^2 = 
    \frac{1}{N} \Exp\|\ubf\|^2 = \sum_{m=1}^M \delta_mP_m,
\eeq
where $\delta_m$ are the bandwidth fractions \eqref{eq:delm}.

\section{Achievable Spectral Energy and Rate}  \label{sec:lsl}

\subsection{Large System Limit} 
To make the analysis tractable, we consider a certain large system limit
of random instances of the system indexed by the dimension $N$ with $N \rightarrow \infty$.
For each $N$, instead of considering the deterministic
FFT matrix $\Vbf$, we suppose that
 $\Vbf=\Vbf(N)$ is a random unitary matrix 
that is uniformly distributed on the $N \times N$ unitary matrices i.e., Haar distributed.
The sub-band selection vectors 
$\abf=\abf(N)$  are assumed to be a deterministic
sequence satisfying,
\beq \label{eq:delm_limit}
    \lim_{N \rightarrow \infty} \frac{1}{N} |\{ a_k(N) = m \}| = \delta_m.
\eeq
The condition \eqref{eq:delm_limit} imposes that asymptotically 
a fraction $\delta_m$ of the components are in sub-band $m$.  

For the DAC function, $\Qbf_{\rm tx}(\ubf)$, we require that it is Lipschitz continuous
and \emph{componentwise separable} (or, equivalently memoryless
operation) meaning that
\beq \label{eq:Qcomp}
    \xbf = \Qbf_{\rm tx}(\ubf) \Longleftrightarrow x_n = Q_{\rm tx}(u_n),
\eeq
for some scalar-input, scalar-output function $Q_{\rm tx}(\cdot)$.  The componentwise
function $Q_{\rm tx}(\cdot)$ does not change with $N$.  Similarly, we assume
that the channel $\Fbf(\cdot)$ and receiver ADC function act componentwise
with Lipschitz functions $F(\cdot)$ and $Q_{\rm rx}(\cdot)$.  
This corresponds to a memoryless channel. 
Typical quantizers
are not Lipschitz continuous,  but they can be approximated arbitrarily closely
by a Lipschitz function.  We will validate through simulations in Sec. \ref{sec:num_res} that our predictions hold true even
for standard discontinuous quantizers.

\subsection{Achievable Spectral Energy Distributions}
We first compute the asymptotic power spectral distribution of the transmitted symbols $\xbf$.
We define:
\beq \label{eq:alpha_tau_tx}
    \alpha_{\rm tx} := \frac{1}{\overline{P}}\Exp\left[ Q_{\rm tx}^*(U)U \right], 
    \quad 
    \tau_{\rm tx} := \frac{1}{\overline{P}}\Exp|Q_{\rm tx}(U)-\alpha_{\rm tx}U|^2,
\eeq
where $\overline{P}$ is the average per symbol energy in $\zbf$ in 
\eqref{eq:Pbar}, $Q_{\rm tx}^*(U)$ is the complex conjugate of $Q_{\rm tx}(U)$
and the expectation in \eqref{eq:alpha_tau_tx}
is over $U \sim C{\mathcal N}(0,\overline{P})$.

\begin{theorem}  
\label{thm:spec_fd}
Under the above assumptions, let $\rbf=\Vbf\xbf$
be the frequency-domain representation of the transmitted signal $\xbf$.
Then the energy in each sub-band converges almost surely to,
\begin{align} 
    s_m &:= \lim_{N \rightarrow \infty} \frac{1}{N}\sum_{k=0}^{N-1}
        |r_k|^2\indic{a_k=m} \nonumber \\
        &= \delta_m\left[ |\alpha_{\rm tx}|^2P_m + \tau_{\rm tx} \overline{P} \right].
        \label{eq:sm_fd}
\end{align}
In particular, the total energy per symbol converges almost
surely as,
\begin{align} 
    s_{\rm tot} &:= 
    \lim_{N \rightarrow \infty} \frac{1}{N}\|\xbf\|^2 =
         \left( |\alpha_{\rm tx}|^2+ \tau_{\rm tx}\right)\overline{P}.
        \label{eq:stot_fd}
\end{align}
\end{theorem}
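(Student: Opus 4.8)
The plan is to linearize the scalar quantizer and reduce the computation to the action of a single Haar-distributed unitary on two fixed vectors. First I would decompose $Q_{\rm tx}(u) = \alpha_{\rm tx}\,u + d(u)$ with $d(u) := Q_{\rm tx}(u) - \alpha_{\rm tx}u$; the coefficient $\alpha_{\rm tx}$ of \eqref{eq:alpha_tau_tx} is exactly the one for which the distortion $d(U)$ is (circularly) uncorrelated with $U \sim {\mathcal CN}(0,\overline{P})$, and $\tau_{\rm tx}\overline{P} = \Exp|d(U)|^2$. With $\ubf = \Vbf\herm\zbf$ and $\dbf = (d(u_0),\dots,d(u_{N-1}))$, so that $\xbf = \alpha_{\rm tx}\ubf + \dbf$, and using $\Vbf\ubf = \zbf$, the transmit-spectrum vector is
\[
\rbf = \Vbf\xbf = \alpha_{\rm tx}\,\zbf + \Vbf\dbf ,
\]
and, writing $A_m := \diag(\indic{a_k=m})$, the quantity to evaluate is $s_m = \lim_N \tfrac1N\rbf\herm A_m\rbf$.

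The crux is to describe $\Vbf\dbf$. The key observation is that, conditionally on $\zbf$ and on $\ubf$, the matrix $\Vbf$ still acts as a Haar-distributed unitary from $\mathrm{span}(\ubf)^\perp$ onto $\mathrm{span}(\zbf)^\perp$, whereas $\dbf = d(\ubf)$ is now deterministic. Decomposing $\dbf = c\,\ubf + \dbf_\perp$ with $c := \ubf\herm\dbf/\|\ubf\|^2$ and $\dbf_\perp\perp\ubf$, this gives $\Vbf\dbf = c\,\zbf + \wbf$ where, conditionally, $\wbf$ is uniform on the sphere of radius $\|\dbf_\perp\|$ inside $\mathrm{span}(\zbf)^\perp$; hence $\rbf = (\alpha_{\rm tx}+c)\,\zbf + \wbf$. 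Now the defining orthogonality of $\alpha_{\rm tx}$ pays off: $c = \big(\tfrac1N\sum_n \overline{u_n}\,d(u_n)\big)\big/\big(\tfrac1N\|\ubf\|^2\big) \to \Exp[\overline{U}\,d(U)]/\overline{P} = 0$ once the empirical distribution of $(u_n)$ is known to converge to ${\mathcal CN}(0,\overline{P})$, and together with $\tfrac1N\|\dbf\|^2 \to \Exp|d(U)|^2 = \tau_{\rm tx}\overline{P}$ this yields $\tfrac1N\|\dbf_\perp\|^2 \to \tau_{\rm tx}\overline{P}$.

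Using $\rbf = (\alpha_{\rm tx}+c)\zbf + \wbf$ I would then split $\tfrac1N\rbf\herm A_m\rbf$ into three pieces. (i) $|\alpha_{\rm tx}+c|^2\,\tfrac1N\zbf\herm A_m\zbf \to |\alpha_{\rm tx}|^2\,\delta_m P_m$, since the $z_k$ with $a_k=m$ are i.i.d.\ ${\mathcal CN}(0,P_m)$ and number $(\delta_m+o(1))N$ (strong law), and $c\to 0$. (ii) The bilinear term $\tfrac1N\zbf\herm A_m\wbf$ has, conditionally on $(\zbf,\ubf)$, zero mean and second moment $O(1/N)$ (the standard variance formula for a linear functional of a uniform vector on a high-dimensional sphere), so it vanishes. (iii) $\tfrac1N\wbf\herm A_m\wbf$ has conditional mean $\tfrac1N\Tr\!\big(A_m P_{\mathrm{span}(\zbf)^\perp}\big)\cdot\tfrac{\|\dbf_\perp\|^2}{N-1} \to \delta_m\,\tau_{\rm tx}\overline{P}$ (the rank-one projection shifts the trace by at most $1$, and $\tfrac1N\Tr A_m \to \delta_m$) and concentrates around it because $\|A_m\|_F \ll \Tr A_m$. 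Summing gives $s_m = \delta_m(|\alpha_{\rm tx}|^2 P_m + \tau_{\rm tx}\overline{P})$, i.e.\ \eqref{eq:sm_fd}. The total-energy statement \eqref{eq:stot_fd} then follows either by summing \eqref{eq:sm_fd} over $m$ with $\sum_m\delta_m = 1$ and $\sum_m\delta_m P_m = \overline{P}$, or directly from $\tfrac1N\|\xbf\|^2 = \tfrac1N\sum_n|Q_{\rm tx}(u_n)|^2 \to \Exp|Q_{\rm tx}(U)|^2 = (|\alpha_{\rm tx}|^2+\tau_{\rm tx})\overline{P}$.

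Two ingredients carry the weight. The first is that the empirical distribution of $(u_n) = (\Vbf\herm\zbf)_n$ converges almost surely --- in Wasserstein-$2$, so that the quadratically-growing functions $|d(\cdot)|^2$ and $|Q_{\rm tx}(\cdot)|^2$ are admissible test functions --- to ${\mathcal CN}(0,\overline{P})$; this is the step that uses that $\Vbf\herm$ is Haar, that $\tfrac1N\|\zbf\|^2 \to \overline{P}$ almost surely, and the Lipschitz hypothesis on $Q_{\rm tx}$. The second, and the main obstacle, is making the conditioning argument rigorous despite the dependence of $\dbf$ on $\Vbf$: one must check that conditioning on $(\zbf,\ubf)$ really leaves $\Vbf$ Haar on the coset $\{V : V\ubf = \zbf\}$, keep all the $O(1/\sqrt N)$ and $O(1/N)$ remainders uniform over the finitely many sub-bands $m$, and upgrade the in-probability limits to almost-sure ones --- the latter via concentration of Lipschitz functions on the unitary group (together with Gaussian concentration of $\zbf$) and Borel--Cantelli. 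The rest is routine bookkeeping.
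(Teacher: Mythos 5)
Your proposal is correct in substance, but it takes a genuinely different route from the paper. The paper's proof is a short application of Proposition~\ref{prop:linear}: it notes that $(\zbf,\abf)$ converge empirically ($PL(2)$) to $(Z,A)$, writes $\rbf=\Vbf\Qbf_{\rm tx}(\Vbf\herm\zbf)$, invokes the one-iteration special case of the state-evolution result of \cite{rangan2019vector} to conclude $\{(r_n,z_n,a_n)\}$ converge empirically to $(R,Z,A)$ with $R=\alpha_{\rm tx}Z+W_{\rm tx}$, $W_{\rm tx}\sim{\mathcal CN}(0,\tau_{\rm tx}\overline{P})$ independent of $Z$, and then simply evaluates $s_m=\Exp\left[|R|^2\indic{A=m}\right]$ and sums over $m$. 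What you do instead is re-derive that special case from first principles: the decomposition $Q_{\rm tx}(u)=\alpha_{\rm tx}u+d(u)$ with $d(U)$ uncorrelated with $U$, the conditional-Haar (coset) description of $\Vbf$ given $(\zbf,\ubf)$, the resulting representation $\rbf=(\alpha_{\rm tx}+c)\zbf+\wbf$ with $\wbf$ uniform on a sphere in $\mathrm{span}(\zbf)^\perp$, and the term-by-term evaluation of $\tfrac1N\sum_k|r_k|^2\indic{a_k=m}$; this is exactly the conditioning technique underlying the cited proposition, so your argument is a sound, self-contained substitute for the black box. The paper's route buys brevity and reuse (the same proposition is invoked again in the proof of Theorem~\ref{thm:rate_lin}, and the $PL(2)$ formulation it delivers is what is needed there); your route buys transparency and explicit error control, but you must still discharge the steps you flag yourself --- the conditional Haar law on the coset $\{V:\,V\ubf=\zbf\}$, the almost-sure Wasserstein-2 convergence of the empirical law of $\ubf$ to ${\mathcal CN}(0,\overline{P})$, and the upgrade from in-probability to almost-sure limits via concentration on the unitary group and Borel--Cantelli --- which is precisely the content the cited proposition packages. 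One cosmetic point: with the paper's literal definition \eqref{eq:alpha_tau_tx} (conjugate on $Q_{\rm tx}$ rather than on $U$), the exact orthogonality $\Exp[\overline{U}\,d(U)]=0$ holds with $\overline{\alpha_{\rm tx}}$ in place of $\alpha_{\rm tx}$; this is immaterial for the stated limits since only $|\alpha_{\rm tx}|^2$ appears (and Proposition~\ref{prop:linear} uses the $\Exp[\overline{Z}\phi(Z,\Xi)]$ convention), but you should fix one convention and use it consistently.
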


\iftoggle{conference}{}{
\begin{proof}
See Appendix~\ref{sec:spec_proof}.
\end{proof}
}

The proof of Theorem~\ref{thm:spec_fd}
\iftoggle{conference}{in the full paper \cite{dutta2020quantization-arxiv}}{} shows, in fact, that the
frequency-domain representation of the transmitted symbols can be
written as 
\beq \label{eq:awgn_tx}
    \rbf = \Vbf\xbf = \alpha_{\rm tx}\zbf + \wbf_{\rm tx},
\eeq
where $\wbf_{\rm tx}$ has components that are asymptotically independent of $\zbf$ and ``Gaussian-like"
with distribution ${\mathcal CN}(0,\tau_{\rm tx}\overline{P})$.  The vector $\wbf_{\rm tx}$
can be thought as the transmitter quantization noise.
The precise sense in which 
$\wbf_{\rm tx}$ is Gaussian-like is given is somewhat technical and given in the \iftoggle{conference}{full paper \cite{dutta2020quantization-arxiv}}{Appendix}.
What is relevant is that the effect of quantizing and returning to 
frequency domain has the effect of scaling the signal $\zbf$ and adding Gaussian noise.
This makes precise the AGN model in \cite{gersho2012vector,fletcher2007robust}
used in several prior analyzes of low-resolution digital architectures
\cite{barati_initialaccess,dutta2019case}.

From Theorem~\ref{thm:spec_fd}, we see that the fraction of power in sub-band $m$ is,
\beq \label{eq:nu_fd}
    \nu_m := \frac{s_m}{s_{\rm tot}} = 
        \frac{\delta_m(|\alpha_{\rm tx}|^2 P_m/\overline{P} + \tau_{\rm tx})}
        {|\alpha_{\rm tx}|^2 + \tau_{\rm tx}}.
\eeq
For a given DAC function $Q_{\rm tx}(\cdot)$ and input power
level $\overline{P}$, it is shown in  
\iftoggle{conference}{the full paper \cite{dutta2020quantization-arxiv}}{Appendix~\ref{sec:linear_feas}}
that there exists power levels $P_m$ resulting in an 
energy fraction vector $\nubf = (\nu_1,\ldots,\nu_M)$ 
if and only if $\nu_m \geq 0$, $\sum_m \nu_m = 1$
and
\beq \label{eq:nu_fd_feas}
    \nu_m \geq 
        \frac{\delta_m\tau_{\rm tx}}
        {|\alpha_{\rm tx}|^2  + \tau_{\rm tx}}.
\eeq
We will call the set of $\nubf$ satisfying 
these constraints  \emph{linear feasible set}. 
\iftoggle{conference}{}{
Note that \eqref{eq:nu_fd_feas} shows 
there is a lower bound on the energy in any sub-band.  
This arises, intuitively, from the fact that the
quantization noise is white and places energy across the
spectrum.  We will see below that this results in high
OOB emissions settings where
the sampling rate is higher than the signal bandwidth.  
}

\subsection{Achievable Rate}
We next compute the asymptotic achievable rate given by the per symbol mutual information between the transmitted symbols $\zbf$ and received symbols $\zbfhat$:
\begin{align} \label{eq:lin_rate_def}
    R_{\rm lin} := \liminf_{N \rightarrow \infty} \frac{1}{N} I(\zbf;\zbfhat),
\end{align}
We will call this the \emph{linear rate}, since it would be the rate achievable by
the linear transmitter and receiver in Fig.~\ref{fig:fd_txrx}.
Assuming  the components of the noise $\xi_n$ are i.i.d.\ with some distribution
$\xi_n \sim \Xi$ with $\Exp|\Xi|^2 < \infty$, similar to \eqref{eq:alpha_tau_tx}, we define
\beq \label{eq:alpha_tau_rx}
    \alpha_{\rm rx} := \frac{1}{\overline{P}}\Exp\left[ S^*U \right], 
    \quad 
    \tau_{\rm rx} :=  \frac{1}{\overline{P}}\Exp|S-\alpha_{\rm rx}U|^2,
\eeq
where $S$ is the complex random variable, 
\beq 
    S = Q_{\rm rx}\left(F(Q_{\rm tx}(U), \Xi)\right), \quad U \sim C{\mathcal N}(0,\overline{P}),
\eeq
$S^*$ is the complex conjugate of $S$, and  $U$ is independent of $\Xi$.

\begin{theorem} \label{thm:rate_lin}
Under the above assumptions, the linear rate is almost surely bounded below by,
\beq \label{eq:rate_lin}
    R_{\rm lin} \geq 
    \sum_{m=1}^M \delta_m \log\left(1 + \frac{|\alpha_{\rm rx}|^2P_m}{\tau_{\rm rx} \overline{P}} \right).
\eeq
\end{theorem}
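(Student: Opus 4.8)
\emph{Strategy.} The plan is to carry the analysis behind Theorem~\ref{thm:spec_fd} one stage further, through the channel and the receiver ADC, so as to obtain an additive‑Gaussian‑noise (AGN) description of $\zbfhat$, and then to feed this into an elementary second‑moment lower bound on mutual information. Write the end‑to‑end scalar map as $g(u,\xi):=Q_{\rm rx}(F(Q_{\rm tx}(u),\xi))$, so that $\sbf=g(\ubf,\xibf)$ componentwise with $\ubf=\Vbf\herm\zbf$ and $\zbfhat=\Vbf\sbf$. Put $\wbf_{\rm rx}:=\sbf-\alpha_{\rm rx}\ubf$, with $\alpha_{\rm rx},\tau_{\rm rx}$ as in \eqref{eq:alpha_tau_rx}; by that choice the scalar residual $W=S-\alpha_{\rm rx}U$ is uncorrelated with $U$ and has $\Exp|W|^2=\tau_{\rm rx}\overline{P}$. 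Since $\Vbf\Vbf\herm=\Ibf$, this yields the AGN decomposition
\[
\zbfhat=\alpha_{\rm rx}\zbf+\wbf,\qquad \wbf:=\Vbf\wbf_{\rm rx}.
\]

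\emph{The probabilistic core (the hard part).} The step carrying all the stochastic content is to show, by the very same Haar‑unitary argument that the proof of Theorem~\ref{thm:spec_fd} uses for $\wbf_{\rm tx}$ but now applied to the Lipschitz map $g$ in place of $Q_{\rm tx}$, that $\wbf$ is asymptotically independent of $\zbf$ and ``Gaussian‑like'' with per‑component variance $\tau_{\rm rx}\overline{P}$. For the rate bound only the second‑order consequences, stratified by sub‑band, are needed: almost surely over $\Vbf$, with $S_m:=\{k:a_k=m\}$, the empirical distribution over $S_m$ of the pairs $\bigl(\Exp|w_k|^2,\ \Exp[\bar z_kw_k]\bigr)$ (expectations over $(\zbf,\xibf)$ at fixed $\Vbf$) concentrates at $(\tau_{\rm rx}\overline{P},0)$; in particular
\begin{align*}
\frac{1}{|S_m|}\sum_{k\in S_m}\Exp|w_k|^2 &\ \longrightarrow\ \tau_{\rm rx}\overline{P},\\
\frac{1}{|S_m|}\sum_{k\in S_m}\bigl|\Exp[\bar z_kw_k]\bigr| &\ \longrightarrow\ 0.
\end{align*}
This is exactly where the free‑probability / orthogonal‑invariance concentration machinery underlying Theorem~\ref{thm:spec_fd} is reused, and I expect it to be the main obstacle; everything downstream is deterministic bookkeeping.

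\emph{Reducing $I$ to second moments.} Granting the decomposition, I would bound $I(\zbf;\zbfhat)$ using the chain rule, subadditivity of conditional differential entropy, and the fact that extra conditioning cannot increase entropy:
\[
I(\zbf;\zbfhat)=h(\zbf)-h(\zbf\mid\zbfhat)\ \ge\ \sum_{k=0}^{N-1}\bigl[h(z_k)-h(z_k\mid\hat z_k)\bigr].
\]
Since $z_k\sim{\mathcal CN}(0,P_{m(k)})$ we have $h(z_k)=\log(\pi eP_{m(k)})$, while for every $c\in\C$,
\[
h(z_k\mid\hat z_k)\ \le\ h(z_k-c\hat z_k)\ \le\ \log\!\bigl(\pi e\,\Exp|z_k-c\hat z_k|^2\bigr),
\]
the last inequality being $h(Z)\le\log(\pi e\,\Exp|Z|^2)$ for an arbitrary complex $Z$, obtained by applying the Gaussian maximum‑entropy bound together with AM--GM to the $2\times2$ covariance of $(\Re Z,\Im Z)$. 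Optimizing over $c$ gives $h(z_k\mid\hat z_k)\le\log(\pi e\,e_k)$ with the linear MMSE $e_k:=P_{m(k)}-|\Exp[\bar z_k\hat z_k]|^2/\Exp|\hat z_k|^2$, hence $\tfrac1N I(\zbf;\zbfhat)\ge\tfrac1N\sum_k\log(P_{m(k)}/e_k)$. It is this chain of inequalities, not the exact AGN prediction, that makes \eqref{eq:rate_lin} a lower bound.

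\emph{Passing to the limit.} Substituting $\hat z_k=\alpha_{\rm rx}z_k+w_k$ and using the concentration above (which forces $\Exp[\bar z_k\hat z_k]\to\alpha_{\rm rx}P_m$ and $\Exp|\hat z_k|^2\to|\alpha_{\rm rx}|^2P_m+\tau_{\rm rx}\overline{P}$ in the sub‑band‑averaged sense, after a truncation handling the unboundedness of $\log$), the sub‑band average of $\log(P_{m(k)}/e_k)$ reduces to that of $\log\bigl(1+|\alpha_{\rm rx}|^2P_m/\Exp|w_k|^2\bigr)$ up to $o(1)$. Since $\sigma^2\mapsto\log(1+c/\sigma^2)$ is convex on $(0,\infty)$, Jensen's inequality inside each sub‑band gives
\begin{align*}
\frac{1}{|S_m|}\sum_{k\in S_m}\log\!\Bigl(1+\frac{|\alpha_{\rm rx}|^2P_m}{\Exp|w_k|^2}\Bigr)
&\ \ge\ \log\!\Bigl(1+\frac{|\alpha_{\rm rx}|^2P_m}{\tfrac{1}{|S_m|}\sum_{k\in S_m}\Exp|w_k|^2}\Bigr)\\
&\ \longrightarrow\ \log\!\Bigl(1+\frac{|\alpha_{\rm rx}|^2P_m}{\tau_{\rm rx}\overline{P}}\Bigr).
\end{align*}
Multiplying by $|S_m|/N\to\delta_m$ (by \eqref{eq:delm_limit}), summing over $m$, and taking the $\liminf$ (legitimate since all quantities converge almost surely) yields \eqref{eq:rate_lin}. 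Besides the concentration step, the only real care needed is that these convergences are sub‑band averages, so the reduction of $\log(P_{m(k)}/e_k)$ and the Jensen step must be phrased as statements about sums over $S_m$ rather than about individual coordinates $k$.
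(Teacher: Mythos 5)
Your proposal is correct in substance and rests on the same two pillars as the paper's own proof: the AGN characterization of $\zbfhat=\alpha_{\rm rx}\zbf+\wbf$ obtained from the Haar-unitary empirical-convergence result (Proposition~\ref{prop:linear}, applied to the composite map $Q_{\rm rx}(F(Q_{\rm tx}(\cdot),\cdot))$), and a second-moment (Gaussian max-entropy plus linear-estimator) lower bound on mutual information. The one genuine difference is the granularity of the information decomposition. The paper splits $I(\zbf;\zbfhat)$ only down to sub-band blocks (Lemma~\ref{lem:chain}) and then applies a single block-level correlation-coefficient bound (Lemma~\ref{lem:mi_lower}); the quantities entering that bound, $\tfrac1N(\zbf^{(m)})\herm\zbfhat^{(m)}$, $\tfrac1N\|\zbfhat^{(m)}\|^2$, $\tfrac1N\|\zbf^{(m)}\|^2$, are exactly the PL(2) observables, so the limit $\overline{\rho}^{(m)}=|\alpha_{\rm rx}|^2P_m/(|\alpha_{\rm rx}|^2P_m+\tau_{\rm rx}\overline{P})$ drops out with no further work. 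You instead decompose all the way to single coordinates and then must re-aggregate, which is why you need the extra machinery: per-coordinate conditional moments $\Exp|w_k|^2$ and cross terms $\Exp[\bar z_k w_k]$, a truncation to control coordinates where these misbehave, and the Jensen step (convexity of $t\mapsto\log(1+c/t)$) to pass from coordinate-wise linear-MMSE terms to the sub-band average. None of this is wrong---each discarded term is nonnegative, so truncation is safe for a lower bound, and the Jensen fallback only requires sub-band averages of the conditional moments, which follow from the same convergence machinery (modulo the realized-average-versus-conditional-expectation point, which the paper's use of $\Exp$ inside $\rho^{(m)}$ elides to the same degree)---but it is bookkeeping the block-level argument avoids entirely. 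In short: same strategy, same bound; your coordinate-wise route is a valid but technically heavier implementation, and the paper's block correlation-coefficient lemma is the cleaner way to close the limit.
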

\iftoggle{conference}{}{
\begin{proof}
See Appendix~\ref{sec:rate_lin_proof}.
\end{proof}
}

\iftoggle{conference}{It is shown in the full paper
\cite{dutta2020quantization-arxiv} that this bounds
also arise from a simple AGN model of the
transceiver.}{
The rate has a simple interpretation. It is shown in
Appendix~\ref{sec:rate_lin_proof}
that the received symbols are given by,
\beq \label{eq:awgn_rx}
    \zbfhat = \alpha_{\rm rx}\zbf + \wbf_{\rm rx},
\eeq
where $\wbf_{\rm rx}$ is asymptotically independent of $\zbf$ and ``Gaussian-like" with components $C{\mathcal N}(0,\tau_{\rm rx}\overline{P})$
and can be seen as representing the combined effect of the noise in the channel
as well as the DAC and ADC quantization noise.
Similar to Theorem \ref{thm:spec_fd}, the precise sense in which $\wbf_{\rm rx}$ 
is asymptotically Gaussian is given in the proof.
Since $\zbf$ has power $P_m$
in sub-band $m$, the rate lower bound \eqref{eq:rate_lin} is simply the Gaussian capacity 
under the AWGN model \eqref{eq:awgn_rx}.
}
Note that the presented lower bound is achieved using Gaussian inputs. However, as we will show in Sec. \ref{sec:quan_UB}, using Gaussian inputs is not optimal since it does not achieve the maximum high SNR rate. Finding the optimal input distribution is left for future work.

\subsection{Achievable Rate in an AWGN Channel} \label{sec:rate_awgn}
It is useful to consider the special case when we have an additive white Gaussian noise (AWGN) channel
modeled with the function $F(X,\Xi) = X+\Xi$ and $\Xi \sim C{\mathcal N}(0,\sigma^2)$.  Also, 
to make the calculations simple, suppose we assume there is no quantization at the receiver
so that $Q_{\rm rx}(y_n) = y_n$.  Substituting these distributions into \eqref{eq:alpha_tau_rx},
and using the expressions in \eqref{eq:alpha_tau_tx}, we can show that
\beq \label{eq:alpha_tau_awgn}
    \alpha_{\rm rx} = \alpha_{\rm tx}, \quad
    \tau_{\rm rx}  = \tau_{\rm tx} + \frac{\sigma^2}{\overline{P}}.
\eeq
Substituting these values into \eqref{eq:rate_lin}, we obtain,
\beq \label{eq:rate_lin_awgn}
    R_{\rm lin} \geq 
    \sum_{m=1}^M \delta_m \log\left(1 + \frac{|\alpha_{\rm tx}|^2P_m}{\tau_{\rm tx} \overline{P} + \sigma^2} \right).
\eeq
Hence we get the AWGN capacity with a loss from the DAC quantization noise.

\subsection{Achievable Rate When There is No Noise} \label{sec:lin_noise_free}

\iftoggle{conference}{}{
We now consider  the noise-free case.
}
\begin{theorem}  \label{thm:rate_no_noise}
In an AWGN channel if $\sigma^2=0$, then the 
rate bound in \eqref{eq:rate_lin_awgn} 
is given by,
\beq \label{eq:rate_no_noise}
    R_{\rm lin} \geq \log\left( 1 + \frac{|\alpha_{\rm tx}|^2}{\tau_{\rm tx}} \right) - D(\deltabf \| \nubf),
\eeq
for any set of power distributions $\nu_m$ is given by
\eqref{eq:nu_fd}.
\end{theorem}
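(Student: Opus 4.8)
The claim is in fact a pure algebraic rewriting of the noise-free specialization of Theorem~\ref{thm:rate_lin}, so the plan is to start from \eqref{eq:rate_lin_awgn} with $\sigma^2 = 0$, namely $R_{\rm lin} \geq \sum_{m=1}^M \delta_m \log\bigl(1 + |\alpha_{\rm tx}|^2 P_m/(\tau_{\rm tx}\overline{P})\bigr)$, and transport the per-sub-band SNR $|\alpha_{\rm tx}|^2 P_m/\overline{P}$ into the power-fraction variables $\nu_m$ via \eqref{eq:nu_fd}. First I would solve \eqref{eq:nu_fd} for $|\alpha_{\rm tx}|^2 P_m / \overline{P}$: since $\nu_m/\delta_m = (|\alpha_{\rm tx}|^2 P_m/\overline{P} + \tau_{\rm tx})/(|\alpha_{\rm tx}|^2 + \tau_{\rm tx})$, one gets $|\alpha_{\rm tx}|^2 P_m/\overline{P} = (\nu_m/\delta_m)(|\alpha_{\rm tx}|^2 + \tau_{\rm tx}) - \tau_{\rm tx}$.

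Next I would substitute this into the argument of the logarithm. A short computation shows
\[
    1 + \frac{|\alpha_{\rm tx}|^2 P_m}{\tau_{\rm tx}\overline{P}}
    = \frac{\nu_m}{\delta_m}\cdot\frac{|\alpha_{\rm tx}|^2 + \tau_{\rm tx}}{\tau_{\rm tx}}
    = \frac{\nu_m}{\delta_m}\left(1 + \frac{|\alpha_{\rm tx}|^2}{\tau_{\rm tx}}\right),
\]
so the logarithm splits as $\log\bigl(1 + |\alpha_{\rm tx}|^2/\tau_{\rm tx}\bigr) + \log(\nu_m/\delta_m)$.

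Finally I would take the $\delta_m$-weighted sum. Using $\sum_{m=1}^M \delta_m = 1$ (a consequence of \eqref{eq:Pbar}/\eqref{eq:delm_limit}), the first term contributes exactly $\log\bigl(1 + |\alpha_{\rm tx}|^2/\tau_{\rm tx}\bigr)$, while $\sum_{m} \delta_m \log(\nu_m/\delta_m) = -\sum_m \delta_m \log(\delta_m/\nu_m) = -D(\deltabf\|\nubf)$ by definition of the KL divergence. Combining yields \eqref{eq:rate_no_noise}. I do not anticipate a genuine obstacle here; the only point requiring a word of care is that for the manipulation to be meaningful the chosen $\nubf$ must be in the linear feasible set of Section~\ref{sec:lsl} (equivalently, the underlying $P_m$ are nonnegative and of the prescribed form), which is exactly the hypothesis that $\nu_m$ is given by \eqref{eq:nu_fd}; the identity itself holds term-by-term regardless.
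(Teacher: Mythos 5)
Your proposal is correct and follows essentially the same route as the paper's own proof: specialize \eqref{eq:rate_lin_awgn} to $\sigma^2=0$, use \eqref{eq:nu_fd} to rewrite the per-sub-band SNR term as $(\nu_m/\delta_m)\bigl(1+|\alpha_{\rm tx}|^2/\tau_{\rm tx}\bigr)$, split the logarithm, and identify $\sum_m \delta_m\log(\delta_m/\nu_m)$ with $D(\deltabf\|\nubf)$. The only cosmetic difference is that you explicitly invert \eqref{eq:nu_fd} for $|\alpha_{\rm tx}|^2 P_m/\overline{P}$, whereas the paper substitutes $\nu_m$ directly into the numerator; the algebra is identical.
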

\iftoggle{conference}{}{
\begin{proof}
See Appendix~\ref{sec:rate_no_noise_proof}.
\end{proof}
}

Even with no noise, the rate is finite
since linear processing results in Gaussian-like
quantization noise.  Also, the linear rate in \eqref{eq:rate_no_noise} is only achievable for feasible
power allocations \eqref{eq:nu_fd_feas}.

\iftoggle{conference}{}{
The rate bound \eqref{eq:rate_no_noise} has an 
interesting interpretation.
The first term on the right hand side of
\eqref{eq:rate_no_noise}, 
$\log(1+ |\alpha_{\rm tx}|^2/\tau_{\rm tx})$, 
is the rate in \eqref{eq:rate_lin_awgn}
if the energies in the sub-bands were allocated evenly,  $P_m = \overline{P}$ for all $m$.
Also, observe that from \eqref{eq:nu_fd}, when $P_m=\overline{P}$, $\nu_m = \delta_m$.
So the case of $\nubf=\deltabf$ corresponds to the equal power allocation case.
The second term, $D(\deltabf\|\nubf)$, in the right hand side of \eqref{eq:rate_no_noise} 
is a measure of the loss as a result of non-uniformly allocating the power.
In particular, if one attempts to reduce the power in some sub-band (e.g.\ it is an adjacent
carrier), there will be a linear modulation rate penalty.
}

\section{Quantized Capacity Upper Bound}
\label{sec:quan_UB}
The results above show  that a linear transceiver in conjunction with quantization
limits system performance in two key ways:  (a) there is a limit
\eqref{eq:nu_fd_feas} to which OOB emissions can be suppressed; and (b)
even in the regimes in which a desired spectral mask is feasible, there is a rate penalty due to quantization noise.
These shortcomings 
raise the question of whether there are transceivers (possibly non-linear) that can achieve
better rate under quantization constraints.
To understand this, consider again transmitting on $N$ complex symbols,
$\xbf = (x_0,\ldots,x_{N-1})$.  Model the DAC constraint
as a constraint, $x_n \in A$ where $A \subset \C$ 
are the possible values of the (complex) DAC. 
We will write this constraint as,
\beq \label{eq:dac_con}
    \xbf \in A^N := \left\{ \xbf ~|~
    x_n \in A \right\},
\eeq
To impose the spectral mask constraints, let 
$\sbf=(s_1,\ldots,s_M)$ be a vector of target energies in each sub-band.
Recall that $\phi_m(\Vbf\xbf)$ in \eqref{eq:sm_nolim} is the energy
in a sub-band for a transmitted vector $\xbf$.
Thus, the set 
\beq
    G_N(\Vbf,\epsilon) 
     := \left\{
        \xbf \in A^N ~|~ \phi_m(\Vbf\xbf) \in 
            [s_m-\epsilon,s_m]~\forall m \right\},          \label{eq:Fneps} 
\eeq
represents the set of vectors $\xbf$ satisfying the DAC constraint
and the sub-band energy constraints within some tolerance $\epsilon > 0$.
If we restrict the modulation to vectors in the set 
$G_N(\Vbf,\epsilon)$, then the maximum rate
any modulation method can obtain is,
\beq \label{eq:rate}
    R_N(\Vbf,\epsilon) := \frac{1}{N}
        \log\left| G_N(\Vbf,\epsilon) \right|,
\eeq
where  $|G_N(\Vbf,\epsilon)|$ is the cardinality of $G_N(\Vbf,\epsilon)$.

As before, assume $\Vbf \in \C^{N \times N}$ is Haar-distributed on the unitary matrices.
Since $\Vbf$ is random, the rate $R_N(\Vbf,\epsilon)$ in \eqref{eq:rate} 
is also random.  We can use Jensen's inequality to upper bound the
expected rate,
\begin{align*}
    \Exp R_N(\Vbf,\epsilon) &= \frac{1}{N} 
        \Exp \log |G_N(\Vbf,\epsilon)| 
        \leq
        \frac{1}{N} 
         \log \Exp|G_N(\Vbf,\epsilon)|.
\end{align*}
Here, the expectation is over $\Vbf$.
We will be interested in the asymptotic value of this upper bound,
\beq \label{eq:rate_upper}
    \overline{R}:= \lim_{\epsilon \rightarrow 0} 
    \lim_{N \rightarrow \infty}
        \frac{1}{N} 
         \log \Exp|G_N(\Vbf,\epsilon)|.
\eeq
In this definition, we take the limit $\epsilon \rightarrow 0$
to ensure that the modulator asymptotically matches
the target sub-band energy levels exactly.  Note
that the order of the limits over $N$ and $\epsilon$
is important.

\begin{theorem} \label{thm:rate_upper}
Let $\sbf=(s_1,\ldots,s_M)$ be a set of target 
sub-band energy levels. We define $s_{\rm tot}$ as the total energy,
and $\nubf=(\nu_1,\ldots,\nu_M)$ as the vector
of energy distributions
\beq \label{eq:s_nu}
    s_{\rm tot} := \sum_{m=1}^M s_m, \quad
    \nu_m = \frac{s_m}{s_{\rm tot}}.
\eeq
Then, under the above
assumptions, the
asymptotic rate upper bound in \eqref{eq:rate_upper} is given by,
\beq \label{eq:rate_limit}
        \overline{R} = H_{\rm max}(s_{\rm tot}) - D(\deltabf \| \nubf).
\eeq
Here $H_{\rm max}(s)$ is given by
\beq \label{eq:Hmax_def}
   H_{\rm max}(s) = \max_V H(V) \mbox{ s.t. } \Exp|V|^2 = s,
\eeq
where the maximization is over all discrete random variables $V$ on the set $A$ with
second moment $\Exp|V|^2 = s$.  
\end{theorem}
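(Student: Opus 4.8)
The plan is to evaluate the double limit defining $\overline R$ in \eqref{eq:rate_upper} by combining the method of types with a Cram\'er-type large-deviations estimate, and then to send $\epsilon\to 0$. Throughout I will take $A$ to be finite (a physical DAC has finitely many output levels, and otherwise the constrained maximum entropy $H_{\rm max}$ in \eqref{eq:Hmax_def} is not well posed). The key structural observation is that, because $\Vbf$ is Haar distributed, $\Vbf\xbf$ is equal in law to $\|\xbf\|\,\gbf$ with $\gbf$ uniform on the unit sphere of $\C^N$, so the probability that $\xbf$ satisfies the sub-band constraints depends on $\xbf$ only through $\|\xbf\|^2$.

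First I would write, by linearity of expectation, $\Exp|G_N(\Vbf,\epsilon)| = \sum_{\xbf\in A^N}\Pr_{\Vbf}\!\bigl(\phi_m(\Vbf\xbf)\in[s_m-\epsilon,s_m]\ \forall m\bigr)$ and group the $\xbf$'s by their empirical distribution (type) $P$ on $A$. There are only $\mathrm{poly}(N)$ types, the number of $\xbf$ of type $P$ is $e^{NH(P)+o(N)}$, and the type fixes $\tfrac1N\|\xbf\|^2=\Exp_{V\sim P}|V|^2=:s(P)$. Writing $p_N(s,\epsilon)$ for the common value of the probability when $\tfrac1N\|\xbf\|^2=s$, this gives
\beq
  \tfrac1N\log\Exp|G_N(\Vbf,\epsilon)| = \max_{P}\bigl[\,H(P)+\tfrac1N\log p_N(s(P),\epsilon)\,\bigr]+o(1).
\eeq

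Next I would pin down the rate of $p_N(s,\epsilon)$. Realizing $\gbf=\hbf/\|\hbf\|$ with $\hbf$ having i.i.d.\ ${\mathcal CN}(0,1)$ entries, one has $\phi_m(\Vbf\xbf)=s\,T_m/T$ where $T_m=\sum_{k:\,a_k=m}|h_k|^2$ is a sum of $\approx\delta_m N$ i.i.d.\ unit-mean exponentials and $T=\sum_m T_m$. Since $\sum_m\phi_m(\Vbf\xbf)=s$ identically, the target event is empty unless $s\in[s_{\rm tot}-M\epsilon,s_{\rm tot}]$; when it is non-empty, Cram\'er's theorem (coordinatewise, then independence across $m$) together with the contraction principle for $(t_m)\mapsto(s\,t_m/\sum_j t_j)$ yield
\beq
  \lim_{N\to\infty}\tfrac1N\log p_N(s,\epsilon) = -\inf\Bigl\{{\textstyle\sum_m}\delta_m\Lambda^*(t_m/\delta_m)\ :\ s\,t_m/{\textstyle\sum_j} t_j\in[s_m-\epsilon,s_m]\ \forall m\Bigr\},
\eeq
with $\Lambda^*(u)=u-1-\log u$. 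Letting $\epsilon\to 0$ forces $t_m\propto\nu_m$; substituting $t_m=c\nu_m$ and using $\sum_m\delta_m=\sum_m\nu_m=1$ collapses the objective to $(c-1-\log c)+D(\deltabf\|\nubf)$, minimized at $c=1$, so the inner rate tends to $D(\deltabf\|\nubf)$ while $s$ is pinched to $s_{\rm tot}$. Feeding this back, only types with $s(P)\to s_{\rm tot}$ survive, and by the method of types (density of $N$-types among distributions on the finite set $A$) the surviving entropy supremum converges to $H_{\rm max}(s_{\rm tot})$, which gives $\overline R = H_{\rm max}(s_{\rm tot})-D(\deltabf\|\nubf)$; when $s_{\rm tot}$ lies outside $[\min_{v\in A}|v|^2,\max_{v\in A}|v|^2]$ both sides are $-\infty$.

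The hard part will be justifying the interchange of the three limiting operations — $\lim_{\epsilon\to 0}$, $\lim_{N\to\infty}$, and the maximum over types — because the admissible range of $s$ shrinks with $\epsilon$ while \eqref{eq:rate_upper} takes $N\to\infty$ first. I would handle this by proving matching two-sided bounds on $\Exp|G_N(\Vbf,\epsilon)|$ directly: a Chernoff/union bound over the polynomially many types for the upper bound, and, for the lower bound, exhibiting a single near-optimal type and bounding $p_N$ from below for it via a local limit theorem for the Dirichlet vector $(T_1/T,\dots,T_M/T)$ in a neighbourhood of $\nubf$ (equivalently, evaluating its density there, which is $e^{-ND(\deltabf\|\nubf)+o(N)}$). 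The auxiliary facts needed along the way — uniform control of $\tfrac1N\log p_N(s,\epsilon)$ for $s$ in the relevant compact range, and uniform continuity of $s\mapsto H_{\rm max}(s)$ — both follow from concavity.
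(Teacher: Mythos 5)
Your argument is correct and reaches the stated formula, but it is organized differently from the paper's proof. Both proofs rest on the same probabilistic core: by Haar invariance, $\Vbf\xbf$ has the law of $\|\xbf\|$ times a uniform vector on the unit sphere, so the sub-band energies factor as (total per-sample energy) $\times$ (sphere energy fractions), and the fraction vector obeys a large-deviations principle with rate $D(\deltabf\|\nubf)$ while the $\epsilon\to 0$ limit pinches the total energy to $s_{\rm tot}$. Where you differ is in how the entropy term arises: you group the sequences $\xbf\in A^N$ by type, use the $e^{NH(P)+o(N)}$ count, and maximize $H(P)$ subject to $s(P)=s_{\rm tot}$, which delivers $H_{\rm max}(s_{\rm tot})$ directly; the paper instead draws $\xbf_N$ uniformly on $A^N$, applies Cram\'er's theorem to $S_N=\frac1N\|\xbf_N\|^2$ to get the rate $\log|A|-I_S(s_{\rm tot})$, and then invokes a separate exponential-family/Legendre-duality lemma (Lemma~\ref{lem:Hmax}) to identify $\log|A|-I_S(s)$ with $H_{\rm max}(s)$ --- your constrained entropy maximization over types is exactly the dual statement of that lemma, so you absorb it into the combinatorics rather than proving it separately. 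You also obtain the $D(\deltabf\|\nubf)$ rate via normalized i.i.d.\ exponentials, Cram\'er and the contraction principle, whereas the paper evaluates the Dirichlet density of the fraction vector and applies Stirling; both computations are valid. Two further points in your favor: you make explicit the finiteness assumption on $A$ (which the paper uses implicitly through $\log|A|$), and you correctly flag the interchange of the $\epsilon\to 0$ limit, the $N\to\infty$ limit, and the maximization over types as the step needing uniform two-sided bounds --- a point the paper passes over lightly by appealing only to continuity of $I_S$. The paper's route keeps the probability calculation cleaner (two independent one-dimensional LDPs joined by independence of $S_N$ and the sphere fractions), while yours is more self-contained about where $H_{\rm max}$ comes from; the completions you sketch (union bound over polynomially many types, Dirichlet density lower bound near $\nubf$, continuity of $s\mapsto H_{\rm max}(s)$ by concavity) are the right ingredients to make it rigorous.
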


\iftoggle{conference}{We see that the upper bound in 
Theorem~\ref{thm:rate_upper} and the achievable noise-free rate
in Theorem~\ref{thm:rate_no_noise} have a similar form, but with
a constant gap and the fact that achievable rate with
linear transceiver are limited to the feasible region
\eqref{eq:nu_fd_feas}.
}
{
The rate upper bound in \eqref{eq:rate_limit} has a natural interpretation.
The term $H_{\rm max}(s_{\rm tot})$ is the maximum entropy we could obtain if we are restricted
to the DAC constellation $A$ and need to achieve a certain total power $s_{\rm tot}$.
If we select the symbols of $x_n$ from the distribution that achieves this entropy,
we would obtain an output spectrum that is flat.  If we need to have a non-uniform power spectrum,
we pay an additional penalty $D(\deltabf\|\nubf)$.
The term $D(\deltabf\|\nubf)$ is precisely
the power distribution loss we saw in the linear rate lower bound \eqref{eq:rate_no_noise}. 
Note that as in Theorem~\ref{thm:rate_no_noise}, Theorem~\ref{thm:rate_upper} applies to the no-noise case. Comparing the rate lower and upper bounds in these theorems, 
we see that there is a gap,
\beq \label{eq:gap}
    \overline{R} - R_{\rm lin} \leq H_{\rm max}(s_{\rm tot})
        - \log\left( 1 + \frac{|\alpha_{\rm tx}|^2}{\tau_{\rm tx}} \right).
\eeq
We will see in the simulations below that for most practical values, this gap is less than one bit.
}

\section{Numerical Results}
\label{sec:num_res}

\begin{figure}
    \centering
    \includegraphics[width=0.36\textwidth]{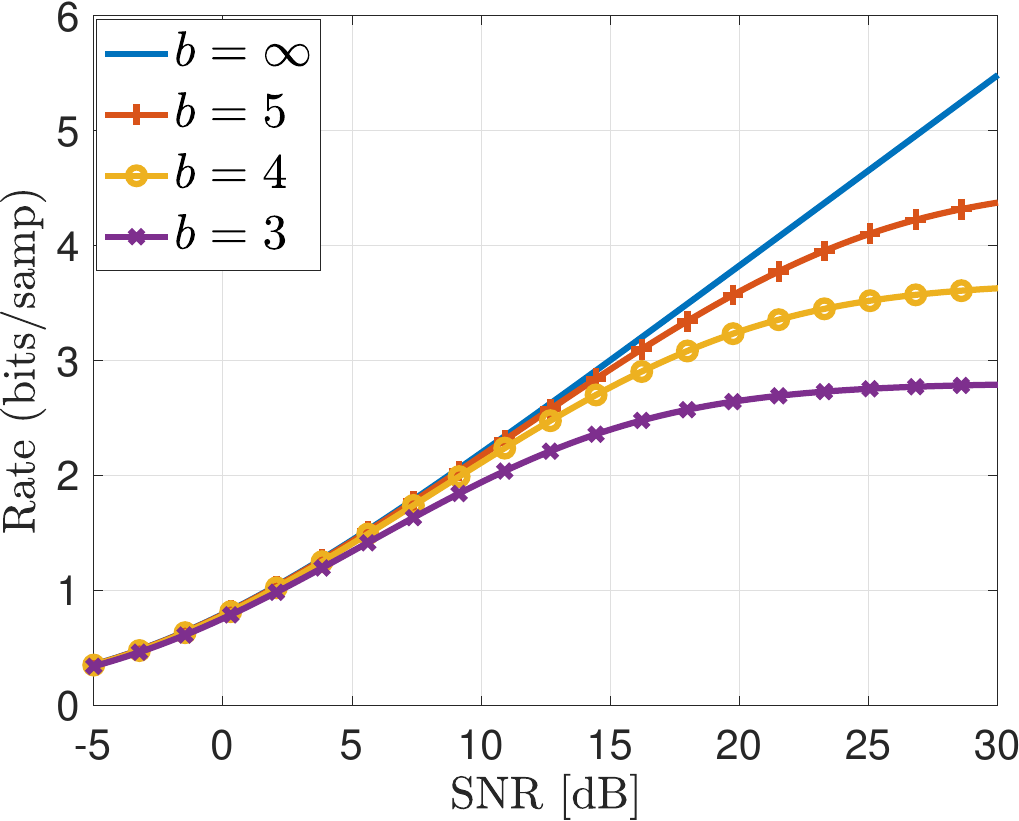}
    \caption{Achievable rate of a system where all the transmit power is allocated to one of two sub-band for different number of DAC bits.}
    \label{fig:rate_snr}
\end{figure}

\begin{figure}
    \centering
    \includegraphics[width=0.36\textwidth]{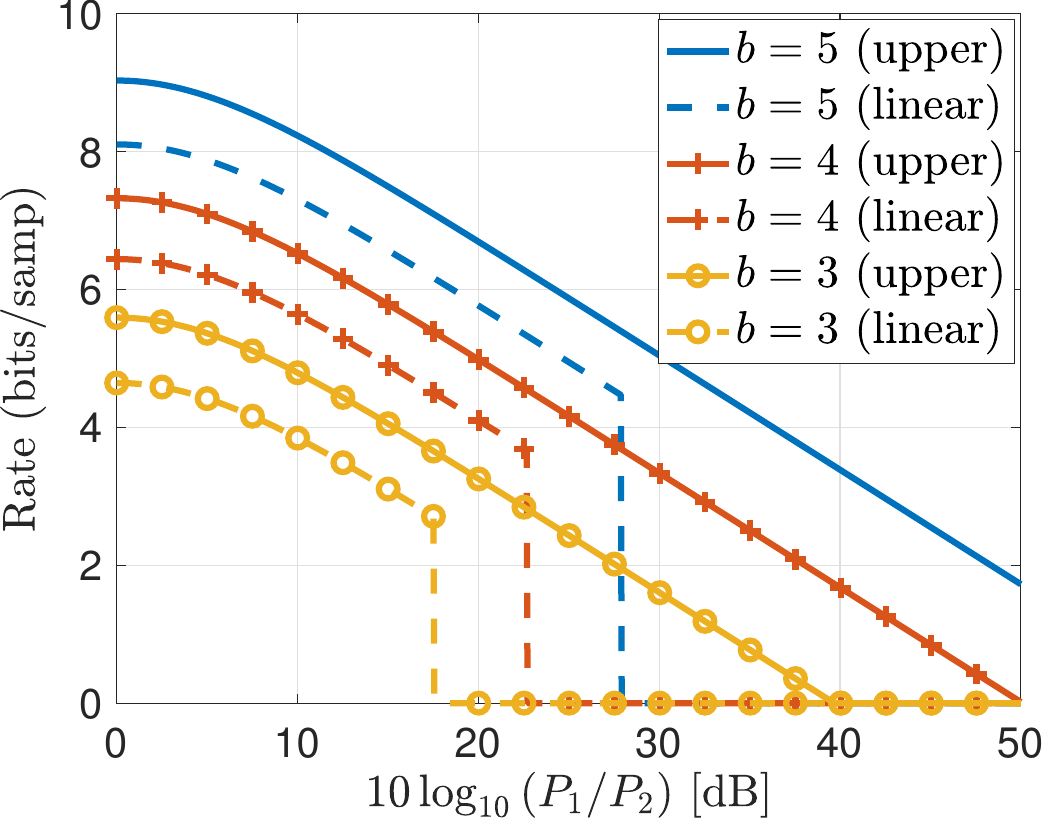}
    \caption{Rate versus adjacent channel leakage in a two sub-band system.  The solid lines show the upper bounds on the achievable rate (Theorem~\ref{thm:rate_upper}) 
    and the dashed lines show the achievable rate predicted by the linear AGN model (Theorem~\ref{thm:rate_no_noise}).}
    \label{fig:rate_aclr}
\end{figure}



 To illustrate
 the results, consider a system where the transmission bandwidth is divided into two equal sub-bands of normalized widths $\delta_1 = \delta_2 = 0.5$. The base-band signal $\ubf$ is designed such that all its energy is concentrated over the first sub-band (representing an in-band signal). Any leakage into sub-band $2$ (representing an adjacent band) is undesirable. Most wireless standard specify a minimum ratio of the in-band to the adjacent band power which defines the \emph{spectrum mask}. The transmitter is equipped with a $b$-bit DAC. The finite resolution of the DAC introduces quantization noise both in-band and in the adjacent carrier.

The effect of the quantization noise on the in-band signal is shown in Fig. \ref{fig:rate_snr}. The achievable rate over 
an AWGN channel for different SNRs and DAC resolutions ($b$) is computed using \eqref{eq:rate_lin_awgn}
assuming a scalar uniform quantizer in both real and imaginary components (I and Q).  We observe that as the resolution of the DAC increases the achievable rate of the system becomes closer to the ideal AWGN capacity (i.e., $b = \infty$). Note that the high SNR achievable rate approaches $b$ bits per sample instead of $2b$ ($b$ bits from in-phase and $b$ bits from quadrature components) since half of the bandwidth is used due to spectral mask constraints. More interestingly, we see that in the low SNR regime there is very little or no loss in rate due to low resolution quantizers. Practical mmWave systems generally operate at the low SNR range \cite{dutta2019case}, particularly when SNR is achieved with beamforming.  The results thus confirm that the rate loss will be negligible in typical low-SNR cellular settings as observed in extensive 
simulations mentioned earlier
\cite{barati_initialaccess,mo2015capacity,rini2017general,mezghani2012capacity,singh2009limits,koch2013low,nossek2006capacity,Studer2016,jacobsson2017throughput,mollen2016uplink,mo2017hybrid}.

On the other hand, a more serious issue is 
the spectral mask constraint. 
Fig.~\ref{fig:rate_aclr} plots the no-noise achievable
rate from \eqref{eq:rate_no_noise} as a function of the signal to adjacent
power, $P_1/P_2$, sometimes called the 
adjacent carrier leakage ratio (ACLR).
We see that, with linear modulation, the
maximum ACLR with non-zero rate is strictly limited.
Fig.~\ref{fig:rate_aclr} also plots the theoretical
maximum rate vs.\ ACLR from Theorem~\ref{thm:rate_upper}.  In the feasible
regime, the linear rate is within one bit of this
upper bound.  But, the upper bound at least 
permits higher ACLRs suggesting that 
more advanced transmitters may be able to 
suppress OOB emissions further.

\paragraph*{Practical low resolution 5G Systems}
Our theory applies to a theoretical 
random transform model.  
\iftoggle{conference}{We illustrate the
model's predictive capabilities in a simulation of 
5G New Radio (NR) \cite{3GPP38300} configured to transmit a
200~MHz channel with a sampling rate of 983 Ms/s, a common
parameter setting in a multi-carrier deployment. 
Fig.~\ref{fig:aclrVsRes} shows the measured ACLR and compares the simulated system with linear AGN model in Theorem~\ref{thm:spec_fd}. We see that the predictions are accurate with $\approx 1$~dB.  See details 
in the full paper \cite{dutta2020quantization-arxiv}.}{
We illustrate the
model's predictive capabilities in a practical 
transceiver shown in 
in Fig.~\ref{fig:linear_mod}.
We consider typical 
for multi-carrier operation 
in the 5G New Radio (NR) standard \cite{3GPP38300} 
using common parameters for 28~GHz \cite{3GPP38104}.
To accommodate variable bandwidths,
the DAC is typically run at a maximum sample
rate.  In this case, we assume an NR standard rate of
$f_{\rm samp} = 2\times 491.5 = 983$~MHz.
A mobile may be allocated a smaller bandwidth, 
say 200 MHz, which would be produced in the NR standard via an OFDM signal at $2 \times 122.6$~MHz.
The modulated baseband signal would be then digitally upconverted to the sampling rate of $f_{\rm samp}$~MHz and digitally filtered to reject spectral images. This interpolated signal is passed through a $b$-bit DAC.

Fig.~\ref{fig:dac_psd} shows the  
output power spectral 
density (PSD) under various numbers of bits in the DAC.
We see that the low DAC resolution creates quantization
noise across the entire bandwidth. The level of that noise increases as the DAC resolution ($b$) is lowered. Moreover, the OOB noise has a flat spectrum (with some decay due to the zero order hold circuit) and justifies the Gaussian model in (\ref{eq:awgn_tx}).

Next, Fig.~\ref{fig:aclrVsRes} shows the ratio of the in-band power $(P_1)$ to the power ``leaked'' into the adjacent band ($P_2$) and compares the simulated system with linear AGN model in Theorem~\ref{thm:spec_fd}. We see that the AGN model is within $\approx 1$~dB of the simulated adjacent channel leakage ratio. The error comes from the fact that the practical simulation models a zero order hold circuit which attenuates some of the OOB noise. Further, the NR OFDM specifications includes a guard band ($\approx 10$~MHz) which is not included in the theoretical calculations. 
Otherwise, we see that the theoretical model
provides an excellent prediction of the spectrum
in a practical low-resolution transmitter. 
}


\iftoggle{conference}{}{
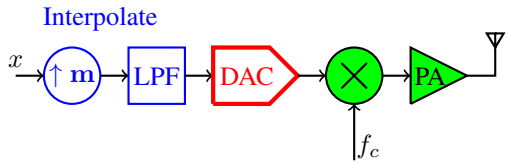
\begin{figure}
    \centering
    \begin{tikzpicture} [scale=0.75]
\draw[thick,->] (-0.5,0.5) -- (0,0.5) node [align=center] at (-0.5,0.75) {$x$}; 
\draw[thick,blue] (0.5,0.5) circle (0.5) node [align=center] at (0.5,0.5) { $\mathbf{\uparrow m}$} node [align=center] at (1,1.5) {Interpolate} ;
\draw[thick,->] (1,0.5) -- (1.5,0.5); 
\draw[thick,blue] (1.5,0) rectangle (2.5,1) node [align=center] at (2,0.5) {LPF};
\draw[thick, ->] (2.5,0.5) -- (3,0.5); 
\draw[ultra thick, red] (3,0) -- (3,1) -- (4,1) -- (4.5,0.5) -- (4,0) -- (3,0) node [align=center] at (3.6,0.5) {DAC};
\draw[thick, ->] (4.5,0.5) -- (5,0.5);
\draw[thick, fill=green] (5.5,0.5) circle (0.5) node [align=center] at (5.5,0.5) {\huge $\times$};
\draw[thick,->] (5.5,-1) -- (5.5,-0) node [align=center] at (5.75,-0.75) {$f_c$};
\draw[thick, ->] (6,0.5) -- (6.5,0.5);
\draw[thick, fill=green] (6.5,0) -- (6.5,1) -- (7.5,0.5) -- (6.5,0) node [align=center] at (6.85,0.5) {PA};
\draw[thick] (7.5,0.5) -- (8.0,0.5) -- (8, 1.25) -- (7.85,1.25) -- (8,1) -- (8.15,1.25) -- (8,1.25);
\end{tikzpicture}
    \caption{Simplified diagram showing standard linear upconversion
    and transmission.}
    \label{fig:linear_mod}
\end{figure}

\begin{figure}
    \centering
    \includegraphics[trim={0cm 0cm 0cm 0.75cm}, clip,width=0.4\textwidth]{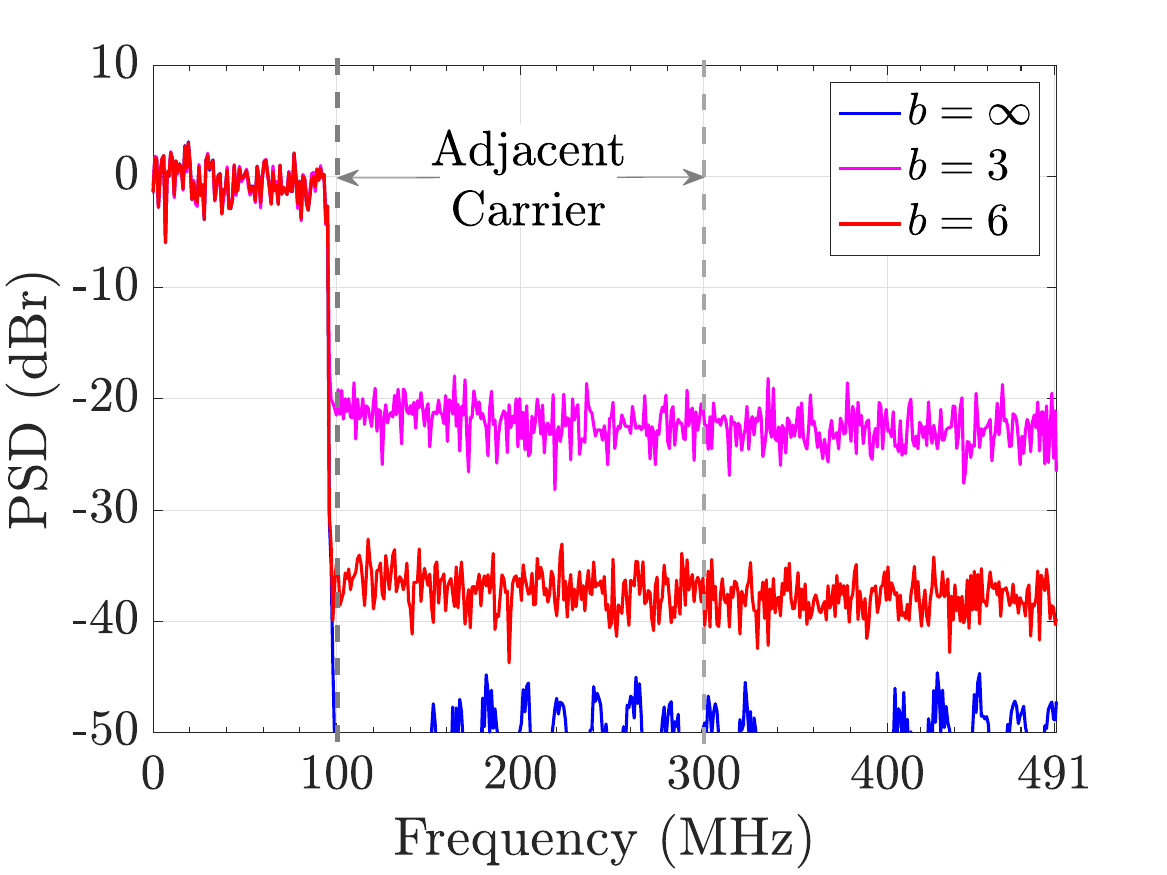}
    \caption{PSD of the linear modulator used for transmitting
    a 400~MHz channel centered at $28$~GHz in a 5G NR system sample rate $f_{\rm samp}= 983$ Ms/s.  The PSD is shown for various number of bits $(n)$ in the DAC.}
    \label{fig:dac_psd}
\end{figure}
}

\begin{figure}
    \centering
    \includegraphics[width=0.38\textwidth]{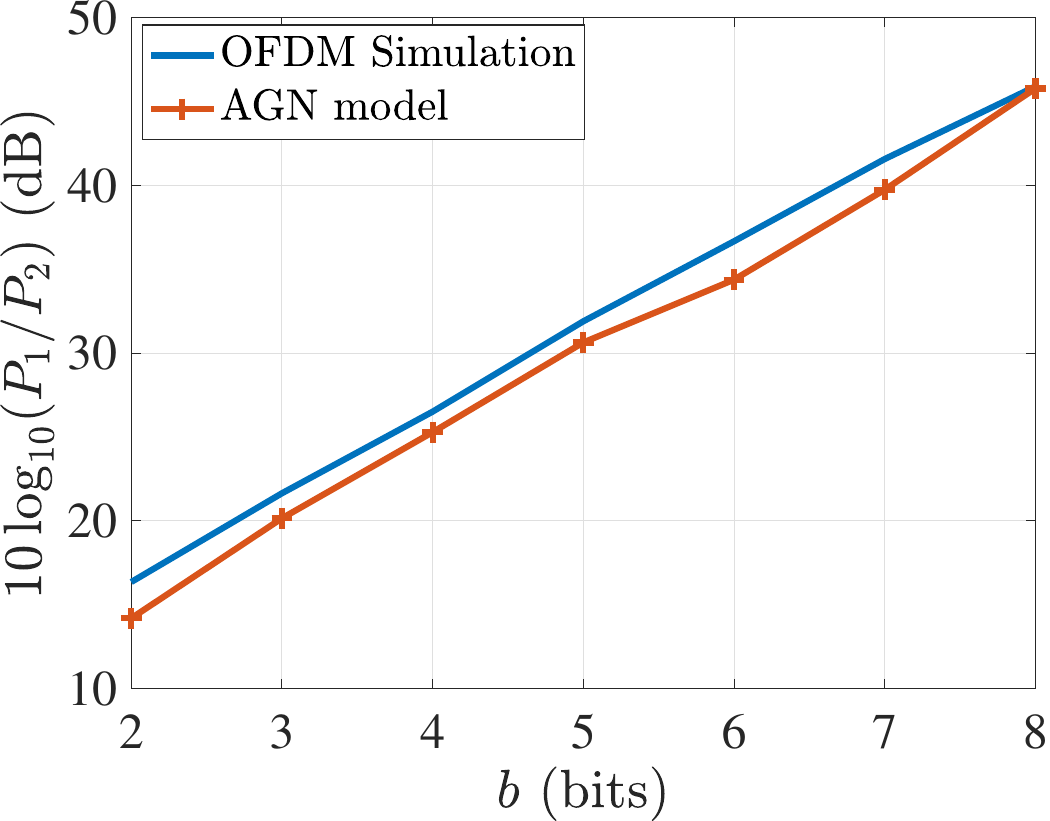}
    \caption{ACLR with a finite DAC resolution ($b$) for a $200$~MHz 3GPP NR OFDM transmitter compared with the proposed AGN model.}
    \label{fig:aclrVsRes}
\end{figure}


\section{Conclusions and Future Work}
We have presented a simple large random limit model
for analyzing the effect of quantization
on a class of linear transceivers.
Importantly, the analysis rigorously captures both
the effects on rate and power spectrum,
including OOB emissions -- key properties
for emerging mmWave systems.  
The analysis confirms earlier simulations that,
for 5G systems, 
low-resolution transceivers cause negligible 
loss in achievable communication rates.  However, OOB emissions
are more problematic. From an information theoretic perspective, this motivates consideration of more advanced modulation
and demodulation methods used in conjunction with low resolution DAC and ADC.  
\iftoggle{conference}{ One approach is to consider
approximate message passing
(AMP) algorithms designed for systems with 
random unitary transforms
\cite{ma2017orthogonal,rangan2019vector,cakmak2014samp,fletcher2018inference,schniter2016vector,he2017generalized,mo2017channel}
and related theoretical results \cite{reeves2017additivity,barbier2019optimal}.
}{ An obvious 
class of methods would be approximate message passing
(AMP) algorithms designed for systems with 
random unitary transforms
\cite{ma2017orthogonal,rangan2019vector,cakmak2014samp,fletcher2018inference,schniter2016vector,he2017generalized}.
These methods indeed
have already been used in mmWave low-resolution receivers \cite{mo2017channel}.
In addition, improved bounds similar to
Theorem~\ref{thm:rate_upper} can likely be
derived from related statistical physical 
techniques that analyze systems exactly of this
form \cite{reeves2017additivity,barbier2019optimal}.
}

\iftoggle{conference}{
\paragraph*{Acknowledgements}
The work  supported in part by
NSF grants  1302336,  1564142,  1547332, and 1824434,  NIST, SRC, and the industrial affiliates of NYU WIRELESS.
}{}

\balance
\bibliographystyle{IEEEtran}
\bibliography{bibl}

\iftoggle{conference}{}{

\newpage

\clearpage
\appendices  

\section{Empirical Convergence of Random Variables}

For the results in Section~\ref{sec:lsl}, we need to first review some technical definitions
on empirical convergence of random variables.  The analysis framework was developed
by Bayati and Montanari \cite{bayati2011dynamics} and also used in the VAMP analysis of \cite{rangan2019vector}.
For a given $p\geq 1$, a map $\gbf:\C^{d}\rightarrow \C^{r}$ is called \emph{pseudo-Lipschitz} of order $p$ if
\beq \label{eq:plfun}
\|\gbf(\rbf_1)-\gbf(\rbf_2)\|\leq C\|\rbf_1-\rbf_2\|\left(1+\|\rbf_1\|^{p-1}+\|\rbf_2\|^{p-1}\right),
\eeq
for some constant $C > 0$.  Note that when $p=1$, we obtain the standard definition of Lipschitz
continuity.

Now suppose that for each $N$, $\xbf(N)$ is a block vector 
$\xbf(N) = (\xbf_1,\ldots,\xbf_N)$ with components $\xbf_n \in \C^d$ for some fixed dimension $d$.
Thus, the total length of the vector is $Nd$.  
Let $X \in \C^d$ be a random vector.  We say that the components $\xbf(N)$ \emph{converge empirically to $X$ with $p$-th
order moments} if
\beq \label{eq:plconv}
    \lim_{N \rightarrow \infty} \frac{1}{N} \sum_{n=0}^{N-1} \phi(\xbf_n(N)) = \Exp\left[ \phi(X) \right],
\eeq
for all pseduo-Lipschitz functions of order $p$.  Loosely speaking, the condition requires that
the empirical distribution of the components of $\xbf(N)$ converge in distribution to the random variable $X$.
The condition will be satisfied when $\xbf_n$ are i.i.d. with distribution $X$.
We will often drop the index $N$ and write,
\beq \label{eq:plp}
    \lim_{N \rightarrow \infty} \{ \xbf_n \} \stackrel{PL(p)}{=} X.
\eeq

\section{Distributions under Random Transforms}

We next need a key result from \cite{rangan2019vector} that describes the distribution of
vectors under random unitary transforms.
Consider a sequence of systems indexed by $N$, and for each $N$ suppose that 
$\Vbf \in \C^{N \times N}$ is uniformly distributed on the unitary matrices.
Let $(\xbf,\sbf)=(\xbf(N),\sbf(N))$ be a sequence of vectors that
converge empirically to random variables $(X,S)$ in that
\beq \label{eq:xs_gen}
    \lim_{N \rightarrow \infty} \{ (x_n,s_n) \} \stackrel{PL(2)}{=} (X,S).
\eeq
Now consider a vector $\ybf$ generated by,
\beq \label{eq:yG_gen}
    \ybf = \Vbf \phibf( \Vbf\herm \xbf,\xibf),
\eeq
where $\phibf(\cdot)$ is some function that operates componentwise in that
\[
    \ybf = \phibf(\zbf,\xibf) \Longleftrightarrow y_n = \phi(z_n,\xi_n),
\]
for some scalar-valued, Lipschitz-continuous function $\phi(\cdot)$.  
Assume that $\xibf$ also converges empirically in that
\[
    \lim_{N \rightarrow \infty} \{ \xi_n \} \stackrel{PL(2)}{=} \Xi,
\]
for some random variable $\Xi$.
To analyze the statitistics on $\ybf$, we define three key quantities:
\begin{subequations} \label{eq:tau_alpha_gen}
\begin{align}
    P &:= \Exp|X|^2, \\
    \alpha &:= \frac{1}{P}\Exp( \overline{Z}\phi(Z,\Xi) ), \\
    \tau_w &:=  \frac{1}{P}\Exp|\phi(Z,\Xi) - \alpha Z|^2.
\end{align}
\end{subequations}
where $Z \sim {\mathcal CN}(0,P)$.

\begin{proposition}  \label{prop:linear}
Under the above assumptions, 
the components of $(\ybf,\xbf,\sbf)$ converge empirically as,
\beq \label{eq:lin_model_lim}
    \lim_{N \rightarrow \infty} \{(y_n,x_n,s_n) \} \stackrel{PL(2)}{=} (Y,X,S),
\eeq
where $(X,S)$ are the random variables in \eqref{eq:xs_gen} and 
\beq \label{eq:lin_model_gen}
    Y = \alpha X + W, \quad W \sim {\mathcal CN}(0,\tau_wP),
\eeq
with $W$ independent of $(X,S)$.
\end{proposition}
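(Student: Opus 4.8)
The plan is to exploit the two defining features of a Haar-distributed unitary $\Vbf$: it rotates any fixed vector to a uniformly random vector of the same norm, and the coordinates of a uniform vector on a high-dimensional complex sphere behave like i.i.d.\ $\mathcal{CN}(0,\cdot)$ entries. Set $\ubf = \Vbf\herm\xbf$ and $\vbf = \phibf(\ubf,\xibf)$, so that $\ybf = \Vbf\vbf$. Since $\Vbf$ is Haar, $\ubf$ is uniform on the sphere of radius $\|\xbf\|$, and because $\|\xbf\|^2/N \to P$ from \eqref{eq:xs_gen}, the spherical-to-Gaussian comparison gives that the components of $\ubf$ converge empirically to $Z\sim\mathcal{CN}(0,P)$; using that $\Vbf$ is independent of $\xibf$ and that $\phi$ acts componentwise, $\{(u_n,\xi_n,v_n)\} \PLeq (Z,\Xi,\phi(Z,\Xi))$. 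The engine of the proof is then the exact identity $\ybf = \Vbf\vbf = \alpha\,\Vbf\ubf + \Vbf(\vbf-\alpha\ubf) = \alpha\xbf + \Vbf\wbf$, where $\wbf := \vbf-\alpha\ubf$; this isolates the deterministic linear term $\alpha\xbf$ and leaves a residual $\Vbf\wbf$ to be shown Gaussian-like and asymptotically independent of $(\xbf,\sbf)$.

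The constant $\alpha$ in \eqref{eq:tau_alpha_gen} is chosen exactly so that $\wbf$ is asymptotically orthogonal to $\ubf$: by the previous step, $\tfrac1N\langle\ubf,\wbf\rangle = \tfrac1N\sum_n \bar u_n(\phi(u_n,\xi_n)-\alpha u_n) \to \Exp[\bar Z\phi(Z,\Xi)] - \alpha P = 0$, and likewise $\tfrac1N\|\wbf\|^2 \to \Exp|\phi(Z,\Xi)-\alpha Z|^2 = \tau_w P$. Decompose $\wbf = \wbf_\parallel + \wbf_\perp$ into its projection onto $\mathrm{span}(\ubf)$ and the orthogonal complement. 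The two displays above force $\|\wbf_\parallel\|^2/N \to 0$ and $\|\wbf_\perp\|^2/N \to \tau_w P$; hence $\Vbf\wbf_\parallel$, being a multiple of $\Vbf\ubf = \xbf$, contributes only a vanishing correction to the term $\alpha\xbf$.

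It remains to analyze $\Vbf\wbf_\perp$. Condition on $\ubf$: knowing $\Vbf\herm\xbf = \ubf$ is the same as knowing that $\Vbf$ sends the unit vector $\ubf/\|\ubf\|$ to $\xbf/\|\xbf\|$, and conditionally on this event $\Vbf$ still acts as a Haar-random isometry from $\mathrm{span}(\ubf)^\perp$ onto $\mathrm{span}(\xbf)^\perp$. Since $\wbf_\perp \in \mathrm{span}(\ubf)^\perp$ is deterministic given $\ubf$, the vector $\Vbf\wbf_\perp$ is, conditionally, uniform on the sphere of radius $\|\wbf_\perp\|$ inside $\mathrm{span}(\xbf)^\perp$. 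Applying the spherical-to-Gaussian comparison once more — which is precisely the Haar-transform lemma of \cite{rangan2019vector} — its components converge empirically to $W\sim\mathcal{CN}(0,\tau_w P)$; and because $\Vbf\wbf_\perp$ lies in $\mathrm{span}(\xbf)^\perp$, i.e.\ is subject only to the single linear constraint $\sum_n \bar x_n (\Vbf\wbf_\perp)_n = 0$, which is asymptotically invisible to the empirical joint law, $W$ is asymptotically independent of $\xbf$ and hence (as $\sbf$ converges jointly with $\xbf$) of $(X,S)$.

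Assembling the pieces, $\ybf = \alpha\xbf + \Vbf\wbf_\parallel + \Vbf\wbf_\perp$ with the middle term negligible, so $\{(y_n,x_n,s_n)\} \PLeq (\alpha X + W,\,X,\,S)$ with $W\sim\mathcal{CN}(0,\tau_w P)$ independent of $(X,S)$, which is \eqref{eq:lin_model_lim}--\eqref{eq:lin_model_gen}. The last routine task is to check this empirical convergence against arbitrary pseudo-Lipschitz test functions of order two via concentration of the relevant empirical averages, using $\Exp|X|^2<\infty$ and $\Exp|\Xi|^2<\infty$. \emph{The main obstacle is the third step}: making rigorous that conditioning on $\ubf$ leaves $\Vbf$ Haar on the orthogonal complements and, more importantly, upgrading ``$\Vbf\wbf_\perp\in\mathrm{span}(\xbf)^\perp$'' into genuine asymptotic independence of the empirical joint distribution of $(\Vbf\wbf_\perp,\xbf,\sbf)$ uniformly over $PL(2)$ test functions. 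This needs the quantitative concentration estimates for functionals of Haar measure that underlie the cited lemma, and since $PL(2)$ test functions may grow quadratically, one must propagate fourth-moment control rather than merely second moments throughout.
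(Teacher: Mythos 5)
Your proposal is correct and follows essentially the same route as the paper: the paper's proof of Proposition~\ref{prop:linear} is simply to invoke it as a one-iteration special case of the Haar-conditioning convergence analysis of \cite{rangan2019vector} (noting the real-to-complex extension), and your argument --- the decomposition $\ybf=\alpha\xbf+\Vbf\wbf$ with $\wbf$ asymptotically orthogonal to $\ubf$ by the choice of $\alpha$, followed by conditioning on $\Vbf\ubf=\xbf$ so that the residual is uniform on a sphere in $\mathrm{span}(\xbf)^{\perp}$ and hence empirically Gaussian and asymptotically independent of $(\xbf,\sbf)$ --- is precisely a sketch of that cited result's proof. The technical items you flag (the conditional Haar representation, uniformity over $PL(2)$ test functions, and the attendant moment control) are exactly what the reference supplies, so there is no gap beyond what the paper itself delegates to the citation.
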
 
\begin{proof}
This is a special case of one iteration of the general convergence result in \cite[Appendix D]{rangan2019vector}.
That work considers the real-valued case, but the complex case can be proven similarly.
\end{proof}

The model \eqref{eq:lin_model_gen} shows that transformation on $\xbf$ to produce $\ybf$ 
recovers a linearly scaled $\xbf$ plus Gaussian noise. The scaling factor $\alpha$ and 
Gaussian noise variance $\tau_w$ can be computed from the distributions of the components.

\section{Proof of Theorem \ref{thm:spec_fd}} \label{sec:spec_proof}
The theorem is a direct application of the linear model in Propposition~\ref{prop:linear}.
To use the proposition, first observe that,
due to \eqref{eq:delm_limit} and the Gaussian distribution on $\zbf$ in \eqref{eq:zmix},
we have that the sub-band selection $\abf$ and the frequency-domain inputs $\zbf$ converge
empirically as,
\beq \label{eq:alim}
    \lim_{n \rightarrow \infty} \{(z_n,a_n)\} \stackrel{PL(2)}{=} (Z,A),
\eeq
where $A \in \{1,\ldots,M\}$ 
is a discrete random variable with $\Pr(A=m)=\delta_m$ and $Z$ is 
the conditional complex Gaussian, 
\[
    Z \sim C{\mathcal N}(0,P_m) \mbox{ when } A=m.
\]
In particular, the average energy of $Z$ is,
\beq \label{eq:expz_sm}
    \Exp|Z|^2 = \sum_{m=1}^M \delta_m P_m =: \overline{P}.
\eeq
Now, the frequency domain components of the transmitted vector $\xbf$ are given by,
\[
    \rbf = \Vbf\xbf = \Vbf\Qbf(\Vbf\herm \zbf).
\]
Proposition~\ref{prop:linear} then shows that the components of $(\rbf,\zbf,\abf)$ converge empirically as,
\[
    \lim_{N \rightarrow \infty} \{(r_n,z_n,a_n) \} \stackrel{PL(2)}{=} (R,Z,A), 
\]
and
\[
    R = \alpha_{\rm tx}Z + W_{\rm tx}, \quad W_{\rm tx} \sim C{\mathcal N}(0,\tau_{\rm tx}\overline{P}),
\]
where $W_{\rm tx}$ is independent of $Z$.  The sub-band energies,
\begin{align}
    s_m &:= \lim_{N \rightarrow \infty} \sum_{k=0}^{N-1} |r_k|^2\indic{a_k = m} \nonumber \\
        &= \Exp\left[ |R|^2 \indic{A = m} \right] \nonumber \\
    &= \Exp\left[ |\alpha_{\rm tx}Z + W_{\rm tx}|^2|A=m\right]\Pr(A=m) \nonumber \\
    &= \left[ |\alpha_{\rm tx}|^2P_m + \tau_{tx}\overline{P} \right] \delta_m.
\end{align}
This proves \eqref{eq:sm_fd}.
To prove \eqref{eq:stot_fd},
\begin{align*}
    s_{\rm tot} &= \sum_{m=1}^M s_m = \sum_{m=1}^M \left[ |\alpha_{\rm tx}|^2P_m + \tau_{tx}\overline{P} \right] \delta_m \nonumber \\
    &= (|\alpha_{\rm tx}|^2+ \tau_{tx})\overline{P},
\end{align*}
where the last step used \eqref{eq:expz_sm} and the fact that $\sum_m \delta_m = 1$.

\section{The Linear Rate Region} \label{sec:linear_feas}
The following proposition shows that power allocations
$\nu_m$ are feasible if and only if they satisfy 
\eqref{eq:nu_fd_feas}.

\begin{proposition} \label{prop:nu_feas}
Let $\alpha_{\rm tx} \in \C$, $\tau_{\rm tx} > 0$,
$\overline{P} > 0$ and
$\delta_m \geq 0$ with $\sum_m \delta_m = 1$ be given.
For any $\nubf=(\nu_1,\ldots,\nu_M)$, 
the following are equivalent:
\begin{enumerate}[(a)]
\item There exists $P_m \geq 0$ such that 
$\overline{P} = \sum_m \delta_m P_m$ and $\nu_m$ is given by \eqref{eq:nu_fd} for all $m$.
\item $\nu_m$ satisfies \eqref{eq:nu_fd_feas} for all $m$
and $\sum_m \nu_m = 1$.
\end{enumerate}
\end{proposition}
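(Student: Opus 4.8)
The plan is to observe that, once $\alpha_{\rm tx},\tau_{\rm tx},\overline{P}$ and the (positive) bandwidth fractions $\delta_m$ are fixed, the relation \eqref{eq:nu_fd} is an invertible affine map between the power vector $(P_1,\dots,P_M)$ and the energy-fraction vector $(\nu_1,\dots,\nu_M)$. So the proof reduces to writing this map and its inverse explicitly and checking that the constraint set of (a) is carried exactly onto the constraint set of (b), and conversely.

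First I would prove (a) $\Rightarrow$ (b). Suppose $P_m\ge 0$ with $\sum_m \delta_m P_m=\overline{P}$ and $\nu_m$ defined by \eqref{eq:nu_fd}. Since the term $\delta_m |\alpha_{\rm tx}|^2 P_m/\overline{P}$ is nonnegative, \eqref{eq:nu_fd} immediately gives $\nu_m \ge \delta_m\tau_{\rm tx}/(|\alpha_{\rm tx}|^2+\tau_{\rm tx})$, which is \eqref{eq:nu_fd_feas}. Summing \eqref{eq:nu_fd} over $m$ and using $\sum_m\delta_mP_m=\overline{P}$ together with $\sum_m\delta_m=1$ gives $\sum_m\nu_m=\bigl(|\alpha_{\rm tx}|^2+\tau_{\rm tx}\bigr)/\bigl(|\alpha_{\rm tx}|^2+\tau_{\rm tx}\bigr)=1$, which is the remaining part of (b).

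For (b) $\Rightarrow$ (a), I would solve \eqref{eq:nu_fd} for $P_m$, obtaining
\beq
    P_m \;=\; \frac{\overline{P}}{|\alpha_{\rm tx}|^2}\left( \frac{(|\alpha_{\rm tx}|^2+\tau_{\rm tx})\,\nu_m}{\delta_m} - \tau_{\rm tx} \right).
\eeq
By construction this choice reproduces \eqref{eq:nu_fd}. Rearranging, the inequality $P_m\ge 0$ is equivalent to $\nu_m \ge \delta_m\tau_{\rm tx}/(|\alpha_{\rm tx}|^2+\tau_{\rm tx})$, i.e.\ exactly \eqref{eq:nu_fd_feas}, so the hypotheses of (b) force $P_m\ge 0$. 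Multiplying the displayed identity by $\delta_m$, summing over $m$, and using $\sum_m\nu_m=1$ and $\sum_m\delta_m=1$ yields $\sum_m\delta_mP_m=(\overline{P}/|\alpha_{\rm tx}|^2)\bigl((|\alpha_{\rm tx}|^2+\tau_{\rm tx})-\tau_{\rm tx}\bigr)=\overline{P}$, so the average-power constraint holds and (a) follows.

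The computations are routine; the only genuine subtlety I would flag is the handling of degenerate cases, which I view as the main obstacle to a fully correct statement. The inverse map requires $\delta_m>0$ (sub-bands of positive bandwidth, which is the case of interest) and $\alpha_{\rm tx}\neq 0$. When $\alpha_{\rm tx}=0$, \eqref{eq:nu_fd} collapses to $\nu_m=\delta_m$ for every admissible $P_m$, while \eqref{eq:nu_fd_feas} together with $\sum_m\nu_m=1=\sum_m\delta_m$ forces $\nu_m=\delta_m$ as well; hence (a) and (b) both reduce to $\nubf=\deltabf$ and the equivalence is trivial. For $\alpha_{\rm tx}\neq 0$ and $\delta_m>0$, the two implications above are precisely the two halves of the affine bijection, completing the proof.
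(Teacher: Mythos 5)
Your proof is correct and follows essentially the same route as the paper's: the forward direction reads the nonnegativity and normalization directly off \eqref{eq:nu_fd}, and the converse inverts the affine relation with exactly the same formula \eqref{eq:Pm_nu} and the same verification that $\sum_m \delta_m P_m = \overline{P}$. Your explicit treatment of the degenerate cases $\alpha_{\rm tx}=0$ and $\delta_m=0$ is a small refinement the paper's proof silently skips (its inversion divides by $|\alpha_{\rm tx}|^2$ and $\delta_m$), but it does not change the argument.
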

\begin{proof}
$(a) \Rightarrow (b)$:
Suppose there exists $P_m \geq 0$ as in (a) and let $\nu_m$
be given by \eqref{eq:nu_fd}.
Since $\overline{P} = \sum_m \delta_m P_m$, we have $\sum_m \nu_m =1$.  Also, since $P_m \geq 0$, we have $\nu_m$ in \eqref{eq:nu_fd} satisfies the lower bound \eqref{eq:nu_fd_feas}.

$(b) \Rightarrow (a)$: Conversely, suppose we are given $\nubf$ satisfying
 \eqref{eq:nu_fd_feas} with $\sum_m \nu_m = 1$.
Set,
\beq \label{eq:Pm_nu}
    P_m = \left[\frac{\nu_m}{\delta_m}(|\alpha_{\rm tx}|^2 + \tau_{\rm tx}) - \tau_{\rm tx}\right] 
    \frac{\overline{P}}{|\alpha_{\rm tx}|^2}.
\eeq
Therefore, $\nu_m$ satisfies \eqref{eq:nu_fd}.
Since $\nu_m$ satisfies \eqref{eq:nu_fd_feas}, 
$P_m$ in \eqref{eq:Pm_nu} satisfies $P_m \geq 0$. Also,
\begin{align*}
    \MoveEqLeft \sum_m \delta_m P_m =
    \frac{\overline{P}}{|\alpha_{\rm tx}|^2} \left[ 
    \sum_m \nu_m (|\alpha_{\rm tx}|^2 + \tau_{\rm tx})
    - \sum_m \delta_m \tau_{\rm tx} \right] \\
    &= \frac{\overline{P}}{|\alpha_{\rm tx}|^2} \left[ 
    |\alpha_{\rm tx}|^2 + \tau_{\rm tx}
    - \tau_{\rm tx} \right] = \overline{P},
\end{align*}
where we have used the fact that $\sum_m \nu_m = 1$
and $\sum_m \delta_m = 1$.
\end{proof}

\section{Proof of Theorem \ref{thm:rate_lin}} \label{sec:rate_lin_proof}

We need two basic mutual information lemmas.
For $m=1,\ldots,M$, let $\zbf^{(m)}$ and $\zbfhat^{(m)}$ denote the sub-vectors of $\zbf$ and $\zbfhat$
with components in sub-band $m$.  That is, 
\[
    \zbf^{(m)} = \{ z_k | a_k = m \},
\]
and $\zbfhat^{(m)}$ is defined similarly.

\begin{lemma} \label{lem:chain}  The mututal information is bounded below by,
\beq \label{eq:mi_sum}
    I(\zbf;\zbfhat) \geq \sum_{m=1}^M I(\zbf^{(m)};\zbfhat^{(m)}).
\eeq
\end{lemma}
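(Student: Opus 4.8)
The plan is to derive \eqref{eq:mi_sum} from the chain rule for mutual information together with the fact that the blocks $\zbf^{(1)},\ldots,\zbf^{(M)}$ are mutually independent. The independence is immediate from the input model: by \eqref{eq:zmix} the components $z_k$ are independent, and each $\zbf^{(m)}$ collects a disjoint subset of these components, so no component is shared between two distinct blocks. Note that the corresponding received blocks $\zbfhat^{(m)}$ are \emph{not} independent, since the componentwise maps $Q_{\rm tx},Q_{\rm rx}$ and the transforms $\Vbf\herm,\Vbf$ couple all sub-bands; hence \eqref{eq:mi_sum} is genuinely an inequality and one should not expect equality.

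First I would apply the chain rule for mutual information to the ordered decomposition $\zbf=(\zbf^{(1)},\ldots,\zbf^{(M)})$, obtaining
\[
    I(\zbf;\zbfhat) = \sum_{m=1}^M I\bigl(\zbf^{(m)};\zbfhat \bigm| \zbf^{(1)},\ldots,\zbf^{(m-1)}\bigr).
\]
Next, fix $m$ and abbreviate $\zbf^{(<m)} := (\zbf^{(1)},\ldots,\zbf^{(m-1)})$. Because $\zbf^{(m)}$ is independent of $\zbf^{(<m)}$ we have $I(\zbf^{(m)};\zbf^{(<m)})=0$, so applying the chain rule to the pair $(\zbfhat,\zbf^{(<m)})$ gives
\[
    I\bigl(\zbf^{(m)};\zbfhat \bigm| \zbf^{(<m)}\bigr) = I\bigl(\zbf^{(m)};(\zbfhat,\zbf^{(<m)})\bigr).
\]
Finally, $\zbfhat^{(m)}$ is a sub-vector of $\zbfhat$, hence a deterministic function of $(\zbfhat,\zbf^{(<m)})$, so the data-processing inequality yields $I\bigl(\zbf^{(m)};(\zbfhat,\zbf^{(<m)})\bigr) \geq I(\zbf^{(m)};\zbfhat^{(m)})$. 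Summing over $m$ produces \eqref{eq:mi_sum}.

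The argument is essentially bookkeeping, so there is no real obstacle; the only subtlety worth flagging is that the key independence is that of the \emph{input} blocks $\zbf^{(m)}$ (which is exactly where the Gaussian product form \eqref{eq:zmix} is used), not of the outputs. An equivalent route expands both sides in differential entropies and invokes ``conditioning reduces entropy'' together with $h(\zbf^{(m)}\mid\zbf^{(<m)})=h(\zbf^{(m)})$; since $\zbf$ is Gaussian all differential entropies involved are finite, so this causes no difficulty, but the data-processing form above avoids even mentioning entropies and is the version I would write.
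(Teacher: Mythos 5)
Your proof is correct and follows essentially the same route as the paper: the chain rule over the sub-band decomposition of $\zbf$, the independence of the input blocks from \eqref{eq:zmix}, and then discarding everything but $\zbfhat^{(m)}$. The paper phrases the last two steps as entropy expansions with ``conditioning reduces entropy,'' while you package them as the chain-rule identity $I(\zbf^{(m)};\zbfhat\mid\zbf^{(<m)})=I(\zbf^{(m)};(\zbfhat,\zbf^{(<m)}))$ plus data processing, which is an equivalent (and slightly cleaner) bookkeeping of the same argument.
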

\begin{proof}
By the mutual information chain rule,
\beq \label{eq:mi_sum1}
    I(\zbf;\zbfhat) \geq \sum_{m=1}^M I(\zbf^{(m)};\zbfhat|\zbf^{(1)},\ldots,\zbf^{(m-1)}).
\eeq
Also, since the components $\zbf$ are independent, the vectors $\zbf^{(m)}$ are 
independent for different $m$.  Hence,
\beq \label{eq:hm}
    H(\zbf^{(m)}|\zbf^{(1)},\ldots,\zbf^{(m-1)}) 
    = H(\zbf^{(m)}).
\eeq
Therefore,
\begin{align} 
    \MoveEqLeft I(\zbf^{(m)};\zbfhat|\zbf^{(1)},\ldots,\zbf^{(m-1)}) = 
        H(\zbf^{(m)}|\zbf^{(1)},\ldots,\zbf^{(m-1)}) \nonumber \\
        &-
        H(\zbf^{(m)}|\zbfhat,\zbf^{(1)},\ldots,\zbf^{(m-1)}) \nonumber \\
        &\stackrel{(a)}{=}
        H(\zbf^{(m)}) - H(\zbf^{(m)}|\zbfhat,\zbf^{(1)},\ldots,\zbf^{(m-1)}) \nonumber \\
        &\stackrel{(b)}{\geq}
        H(\zbf^{(m)}) - H(\zbf^{(m)}|\zbfhat) \nonumber \\
        &\stackrel{(c)}{\geq}
        H(\zbf^{(m)}) - H(\zbf^{(m)}|\zbfhat^{(m)}) \nonumber \\
        &= I(\zbf^{(m)}|\zbfhat^{(m)}),
        \label{eq:mi_bnd1}
\end{align}
where (a) follows from \eqref{eq:hm}, and (b) and (c) follows from the fact
that conditioning always reduce the entropy.  Substituting \eqref{eq:mi_bnd1} 
into \eqref{eq:hm} proves \eqref{eq:mi_sum}.
\end{proof}

\begin{lemma} \label{lem:mi_lower}  Suppose that $\zbf \in \C^d$ is a complex Gaussian random vector 
with i.i.d.\ components $z_k \sim C{\mathcal N}(0,P)$.  Let $\ybf$ be any other random vector 
with correlation coefficient,
\[
    \rho := \frac{|\Exp(\zbf\herm \ybf)|^2}{\Exp\|\ybf\|^2\Exp\|\zbf\|^2} = \frac{|\Exp(\zbf\herm \ybf)|^2}{\Exp\|\ybf\|^2Pd}.
\]
Then, the mutual information between $\zbf$ and $\ybf$ is bounded below by,
\[
    I(\zbf;\ybf) \geq  -d \ln(1-\rho).
\]
\end{lemma}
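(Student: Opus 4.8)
The plan is to lower-bound $I(\zbf;\ybf)$ by reducing it to a problem about a jointly Gaussian pair with the same second-order statistics, using the fact that the Gaussian distribution minimizes mutual information (equivalently, maximizes differential entropy) for fixed covariance. First I would write $I(\zbf;\ybf) = h(\zbf) - h(\zbf \mid \ybf)$, where $h$ denotes differential entropy; since $\zbf$ has i.i.d.\ $C{\mathcal N}(0,P)$ components, $h(\zbf) = d\ln(\pi e P)$ is known exactly. The task is then to upper-bound $h(\zbf \mid \ybf)$, i.e.\ to show the conditional uncertainty in $\zbf$ given $\ybf$ cannot be too large.

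The key step is to observe that for any estimator $\widehat{\zbf} = g(\ybf)$, we have $h(\zbf \mid \ybf) \leq h(\zbf - \widehat{\zbf})$, and the differential entropy of any complex random vector in $\C^d$ is at most that of a complex Gaussian with the same covariance, namely $h(\zbf - \widehat\zbf) \le d\ln\!\left(\pi e\, \tfrac{1}{d}\Exp\|\zbf-\widehat\zbf\|^2\right)$ by the maximum-entropy bound together with concavity of $\log\det$ (Hadamard / AM--GM on the eigenvalues of the error covariance). So it suffices to exhibit one linear estimator $\widehat\zbf = c\,\ybf$ whose mean-squared error is small. Choosing $c$ to be the scalar linear MMSE-type coefficient $c = \Exp(\ybf\herm\zbf)/\Exp\|\ybf\|^2$ (a single scalar, exploiting the i.i.d./symmetric structure), a direct computation gives
\beq
    \tfrac{1}{d}\Exp\|\zbf - c\,\ybf\|^2 = P - \frac{|\Exp(\zbf\herm\ybf)|^2}{d\,\Exp\|\ybf\|^2} = P(1-\rho),
\eeq
by the definition of $\rho$ in the statement. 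Substituting back yields $h(\zbf\mid\ybf) \le d\ln(\pi e P(1-\rho))$, hence $I(\zbf;\ybf) \ge d\ln(\pi e P) - d\ln(\pi e P(1-\rho)) = -d\ln(1-\rho)$, which is the claim.

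I expect the main obstacle to be purely a matter of bookkeeping rather than a genuine difficulty: one must be careful with the complex-vs.-real convention for differential entropy and the constant $\pi e$ (versus $2\pi e$ per real dimension), and one must justify the maximum-entropy step at the level of the full error covariance matrix — not just marginally per coordinate — which is why the $\log\det \le \sum\log(\text{diagonal})$ / AM--GM reduction to the trace is needed. One should also note the degenerate cases: if $\rho = 1$ the bound is vacuously $+\infty$ and there is nothing to prove, and if $\Exp\|\ybf\|^2 = \infty$ the coefficient $c$ is interpreted as $0$ and $\rho = 0$, giving the trivial bound $I \ge 0$. No independence or Gaussianity of $\ybf$ is used anywhere; only the Gaussianity of $\zbf$ and the second-moment data enter, exactly as the statement suggests.
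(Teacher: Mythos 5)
Your proposal is correct and follows essentially the same route as the paper's proof: decompose $I(\zbf;\ybf)=h(\zbf)-h(\zbf\mid\ybf)$, upper-bound the conditional entropy by the Gaussian entropy at the per-dimension mean-squared error of a (suboptimal) linear estimator, and compute that error as $P(1-\rho)$. Your write-up is in fact slightly more careful than the paper's, since you explicitly justify the step $h(\zbf\mid\ybf)\le d\ln\bigl(\pi e\,\tfrac{1}{d}\Exp\|\zbf-\widehat\zbf\|^2\bigr)$ via the maximum-entropy and AM--GM/Hadamard argument, which the paper states without detail.
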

\begin{proof}
The mutual information is,
\beq \label{eq:Igauss}
    I(\zbf;\ybf) = H(\zbf) - H(\zbf|\ybf).
\eeq
Since $\zbf$ is i.i.d.\ with $d$ components distributed as $C{\mathcal N}(0,P)$, 
\beq \label{eq:Hgauss}
    H(\zbf) = d\ln(\pi e P).
\eeq
Now, given $\ybf$, $\zbf$ will have a conditional variance,
\beq
    \sigma^2 := \frac{1}{d}\Exp\left[ \|\zbf - \zbfhat(\ybf)\|^2\right],
\eeq
where $\zbfhat(\ybf)$ is the MMSE estimator of $\zbf$ given $\ybf$.
So, the conditional entropy $H(\zbf|\ybf)$ is bounded below by
the entropy of the Gaussian,
\beq \label{eq:Hzy1}
    H(\zbf|\ybf) \leq d\ln(\pi e \sigma^2).
\eeq
But, we can further bound $H(\zbf|\ybf)$ by replacing $\sigma^2$ with the variance
for a linear estimator,
\beq \label{eq:sig_bnd}
    \sigma^2 \leq \frac{1}{d} \left[ \Exp\|\zbf\|^2 - \frac{\Exp\|\zbf\herm \ybf\|^2}{\Exp\|\ybf\|^2} \right] = 
    \frac{1}{d} \left[ P - \frac{|\Exp(\zbf\herm \ybf)|^2}{\Exp\|\ybf\|^2} \right].
\eeq
Therefore, substituting \eqref{eq:Hgauss}, \eqref{eq:Hzy1} and \eqref{eq:sig_bnd} into \eqref{eq:Igauss},
\begin{align}
    I(\zbf;\ybf) \geq d\ln(\pi e P) - d \ln(\pi e \sigma^2) \geq -d\ln(1 - \rho). \nonumber
\end{align}
\end{proof}

We use these lemmas as follows.  In each sub-band $m$, the components of $\zbf^{(m)}$
are i.i.d.\ complex Gaussians with zero mean and variance $P_m$.
So, by Lemma~\ref{lem:mi_lower}, 
\beq \label{eq:Im1}
    I(\zbf^{(m)}; \zbfhat^{(m)}) \geq -N_m \ln(1 - \rho^{(m)}),
\eeq
where $N_m$ is the number of coefficients in sub-band $m$ and
$\rho^{(m)}$ is the correlation coefficient,
\beq \label{eq:rhom}
    \rho^{(m)} := \frac{|\Exp\left( (\zbf^{(m)})\herm \zbfhat^{(m)}\right)|^2}{\Exp\|\zbfhat^{(m)}\|^2P_m}.
\eeq
Now, \eqref{eq:delm_limit} shows that $N_m/N \rightarrow \delta_m$.
So, if we divide \eqref{eq:Im1} by $N$ and take the limit we get,
\beq \label{eq:Im2}
    \liminf_{N \rightarrow \infty} \frac{1}{N} I(\zbf^{(m)}; \zbfhat^{(m)}) \geq -\delta_m  \ln(1 - \overline{\rho}^{(m)}).
\eeq
where $\overline{\rho}^{(m)}$ is the limiting correlation,
\beq \label{eq:rho_lim}
    \overline{\rho}^{(m)} := \lim_{N \rightarrow \infty} \rho^{(m)}
\eeq

To compute the limiting correlation in \eqref{eq:rho_lim}, we use 
a similar calculation to the proof of Theorem \ref{thm:spec_fd}.
Specifically, the received symbols are given by,
\[
    \zbfhat = \Vbf\Gbf(\Vbf\herm \zbf + \xibf).
\]
Proposition~\ref{prop:linear} then shows that the components of $(\zbfhat,\zbf,\abf)$ converge empirically as,
\[
    \lim_{N \rightarrow \infty} \{(\widehat{z}_n,z_n,a_n) \} \stackrel{PL(2)}{=} (\widehat{Z},Z,A), 
\]
and
\[
    \widehat{Z} = \alpha_{\rm rx}Z + W_{\rm rx}, \quad W_{\rm rx} \sim C{\mathcal N}(0,\tau_{\rm rx}\overline{P}),
\]
where $W_{\rm rx}$ is independent of $Z$.  Now, we have that,
\begin{align}
   \MoveEqLeft \lim_{N \rightarrow \infty} \frac{1}{N} (\zbf^{(m)})\herm \zbfhat^{(m)}  = 
    \lim_{N \rightarrow \infty} \frac{1}{N} z^*_k \zhat_k\indic{a_k = m}  \nonumber \\
    &= \Exp\left[ Z^*\Zhat \indic{A_k = m} \right] =  \Exp\left[ \overline{Z}\Zhat|A=m \right]\Pr(A=m) \nonumber \\
    &= \Exp\left[ Z^*(\alpha_{\rm rx}Z + W_{\rm rx}) |A=m\right] \delta_m \nonumber \\
    &= \alpha_{\rm rx}P_m \delta_m, \nonumber 
\end{align}
where we have used that, conditional on $A=m$, $\Exp|Z|^2=P_m$ and $\Exp(Z^*W_{\rm rx})=0$.  
Hence,
\beq \label{eq:zzhat_lim1}
    \lim_{N \rightarrow \infty} \frac{1}{N^2} \left| \Exp (\zbf^{(m)})\herm \zbfhat^{(m)} \right|^2 = |\alpha_{\rm rx}|^2P_m^2 \delta_m^2.
\eeq
Similar calculations show that,
\begin{align}
    \lim_{N \rightarrow \infty} \frac{1}{N}  \Exp \|\zbfhat^{(m)} \|^2 &= \left[|\alpha_{\rm rx}|^2P_m 
        + \tau_{\rm rx}\overline{P}\right]\delta_m 
        \label{eq:zzhat_lim2} \\
    \lim_{N \rightarrow \infty} \frac{1}{N}  \Exp \|\zbf^{(m)} \|^2 &= P_m\delta_m. \label{eq:zzhat_lim3}
\end{align}
Substituting \eqref{eq:zzhat_lim1}, \eqref{eq:zzhat_lim2} and \eqref{eq:zzhat_lim3} into \eqref{eq:rhom}, we obtain that the limit
in \eqref{eq:rho_lim} is given by,
\beq \label{eq:rho_lim}
    \overline{\rho}^{(m)} := \frac{|\alpha_{\rm rx}|^2P_m^2 \delta_m^2}{P_m\left[|\alpha_{\rm rx}|^2P_m 
        + \tau_{\rm rx}\overline{P}\right]\delta_m^2}
    = \frac{|\alpha_{\rm rx}|^2P_m}{|\alpha_{\rm rx}|^2P_m + \tau_{\rm rx}\overline{P}}.
\eeq
Hence, from \eqref{eq:Im2}, we obtain
\beq \label{eq:Im3}
    \liminf_{N \rightarrow \infty} \frac{1}{N} I(\zbf^{(m)}; \zbfhat^{(m)}) \geq \delta_m  \log\left(1 + \frac{|\alpha_{\rm rx}|^2P_m}{\tau_{\rm rx}\overline{P}}\right).
\eeq
Substituting \eqref{eq:Im3} into the sum \eqref{eq:mi_sum} obtains \eqref{eq:rate_lin}.

\section{Proof of Theorem~\ref{thm:rate_no_noise}}
\label{sec:rate_no_noise_proof}
This is a straightforward mathematical manipulation
\begin{align*} \label{eq:rate_lin_no_noise1}
    R_{\rm lin} &\stackrel{(a)}{\geq}
    \sum_{m=1}^M \delta_m \log\left(1 + \frac{|\alpha_{\rm tx}|^2P_m}{\tau_{\rm tx} \overline{P}} \right) \nonumber \\
    &=
    \sum_{m=1}^M \delta_m \log\left(\frac{\tau_{tx}\overline{P} + |\alpha_{\rm tx}|^2P_m}{\tau_{\rm tx} \overline{P}} \right)
        \nonumber \\
    &\stackrel{(b)}{=} \sum_{m=1}^M \delta_m \log\left(\frac{\nu_m(|\alpha_{\rm tx}|^2 + \tau_{\rm tx})\overline{P}}{\delta_m \tau_{\rm tx} \overline{P}} \right) \nonumber \\
    &= \log\left( 1 + \frac{|\alpha_{\rm tx}|^2}{\tau_{\rm tx}} \right) + \sum_{m=1}^M \delta_m \log\left(\frac{\nu_m}{\delta_m}\right)
        \nonumber \\
    &= \log\left( 1 + \frac{|\alpha_{\rm tx}|^2}{\tau_{\rm tx}} \right) - D(\deltabf \| \nubf).
\end{align*}
where (a) follows from \eqref{eq:rate_lin_awgn}
with $\sigma^2=0$ and (b)
follows from \eqref{eq:nu_fd}.
This proves     \eqref{eq:rate_no_noise}.

\section{Proof of Theorem~\ref{thm:rate_upper}}
\label{sec:proof_rate}

We first need a lemma to characterize the maximum entropy, $H_{\rm max}(s)$
in \eqref{eq:Hmax_def}.
Let $S$ be a random variable given by $|x|^2$ where $x$ uniformly
distributed on the set of DAC constellation points $x\in A$.
Hence, $\Exp(S)$ is the average energy per sample if the modulator
uniformly selects sequences from the DAC output.  
Let $\lambda_S(\theta)$ be its cummulant generating function,
\beq \label{eq:lamS}
    \lambda_S(\theta) = \log \Exp \exp(\theta S) = 
    \log\left( \frac{1}{|A|}\sum_{x\in A}
        e^{\theta|x|^2} \right),
\eeq
and, let $I_S(s)$ be its Legendre transform,
\beq \label{eq:IS}
    I_S(s) := \sup_{\theta} \left[ \theta s - \lambda(\theta) \right].
\eeq

\begin{lemma}  \label{lem:Hmax} The maximum entropy in \eqref{eq:Hmax_def} is given by,
\beq \label{eq:Hmax}
    H_{\rm max}(s) = \log|A| - I_S(s).
\eeq
\end{lemma}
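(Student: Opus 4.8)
The plan is to recognize the maximum-entropy problem \eqref{eq:Hmax_def} as a classical constrained-entropy problem whose solution is a Gibbs (exponential-tilted) distribution, and then to identify the resulting optimal entropy value with the Legendre-transform expression in \eqref{eq:Hmax}. First I would set up the optimization: we seek a probability mass function $p$ on the finite set $A$ maximizing $H(V) = -\sum_{x\in A} p(x)\log p(x)$ subject to the normalization constraint $\sum_x p(x) = 1$ and the second-moment constraint $\sum_x p(x)|x|^2 = s$. Since $A$ is finite and the feasible set is a (nonempty, for admissible $s$) compact convex polytope with $H$ strictly concave, a unique maximizer exists and is characterized by the KKT/Lagrange conditions. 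Introducing multipliers $\theta$ for the moment constraint and $\mu$ for normalization, stationarity gives $\log p(x) = -1 - \mu + \theta|x|^2$, i.e. the optimal $p$ is of the tilted form
\beq \label{eq:pstar}
    p^*(x) = \frac{e^{\theta|x|^2}}{\sum_{x'\in A} e^{\theta|x'|^2}} = \frac{1}{|A|}\, e^{\theta|x|^2 - \lambda_S(\theta)},
\eeq
where $\lambda_S(\theta)$ is exactly the cumulant generating function in \eqref{eq:lamS} and $\theta=\theta(s)$ is chosen so that $\Exp_{p^*}[|V|^2] = s$, equivalently $\lambda_S'(\theta) = s$.

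Next I would plug \eqref{eq:pstar} back into the entropy. Using $-\log p^*(x) = \log|A| - \theta|x|^2 + \lambda_S(\theta)$, we get
\beq
    H(V^*) = \Exp_{p^*}\!\left[ \log|A| - \theta|V|^2 + \lambda_S(\theta) \right] = \log|A| - \theta s + \lambda_S(\theta),
\eeq
since $\Exp_{p^*}[|V|^2] = s$. It remains to show $\theta s - \lambda_S(\theta) = I_S(s) = \sup_{\theta'}[\theta' s - \lambda_S(\theta')]$. This is the standard fact that $\lambda_S$ is convex and differentiable (a finite sum of exponentials), so the supremum defining the Legendre transform is attained precisely at the $\theta$ solving $\lambda_S'(\theta) = s$ — the very same stationarity condition that pinned down $\theta$ above. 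Hence $H(V^*) = \log|A| - I_S(s)$, which is \eqref{eq:Hmax}.

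The main technical care is in the edge cases: I would note that \eqref{eq:Hmax_def} is only nontrivial for $s$ in the range of achievable second moments, namely $s \in [\min_{x\in A}|x|^2,\ \max_{x\in A}|x|^2]$, and that for $s$ in the interior of this interval the equation $\lambda_S'(\theta)=s$ has a (unique) finite solution because $\lambda_S'$ maps $\R$ monotonically onto the open interval $(\min|x|^2,\max|x|^2)$; the boundary cases are handled by continuity (or degenerate point masses). I expect the only real obstacle to be bookkeeping around these boundary/attainment issues and the appeal to convex-duality of $\lambda_S$ and $I_S$; the variational computation itself is routine. An alternative, cleaner route that avoids Lagrange multipliers altogether is a direct relative-entropy argument: for any feasible $p$, write $H(p) = \log|A| - D(p\,\|\,\text{Unif}(A))$, and then use the Donsker–Varadhan / Gibbs variational identity $D(p\,\|\,\text{Unif}(A)) \geq \theta\,\Exp_p[|V|^2] - \lambda_S(\theta) = \theta s - \lambda_S(\theta)$ for every $\theta$, with equality at $p=p^*$; optimizing over $\theta$ gives $D \geq I_S(s)$ with equality, hence $H_{\max}(s) = \log|A| - I_S(s)$. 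I would likely present this second argument as it is the shortest and makes the equality case transparent.
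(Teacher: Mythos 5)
Your proposal is correct and follows essentially the same route as the paper: identify the maximizer as the exponentially tilted (Gibbs) distribution $p^*(x)\propto e^{\theta|x|^2}$ with $\theta$ fixed by $\lambda_S'(\theta)=s$, compute its entropy, and recognize $\theta s-\lambda_S(\theta)$ as the attained Legendre transform $I_S(s)$. The only difference is cosmetic: the paper simply cites the max-entropy property of the tilted family as known, whereas you justify it via Lagrange stationarity (or the relative-entropy/Gibbs variational identity) and note the boundary cases for $s$, which makes the argument more self-contained but does not change its substance.
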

\begin{proof}
Consider a set of distributions of a discrete random variable $V_\theta$
on the set $A$, parameterized by the scalar real variable $\theta$, where
\beq \label{eq:Vthetapmf}
    P(V_\theta=x) = \frac{1}{|A|Z(\theta)} e^{\theta |x|^2},
    \quad
    Z(\theta) = \frac{1}{|A|}\sum_{x \in A} e^{\theta |x|^2}.
\eeq
It is known that if $\theta$ is selected such that
$\Exp|V_\theta|^2 = s$, then $V_\theta$ is the maximum entropy distribution
over all random variables $V$ on $A$ with $\Exp|V|^2 = s$.
Observe that the cummulant generating function \eqref{eq:lamS} is,
\[
    \lambda_S(\theta) = \log Z(\theta).
\]
A standard result on exponential families is that,
\[
    \lambda_S'(\theta) = \Exp|V_\theta|^2.
\]
Now, for any $s$, we have
\[
    I_S(s) = \widehat{\theta} s - \lambda_S(\widehat{\theta}),
\]
where
\beq \label{eq:theta_max}
    \widehat{\theta} = \argmax_{\theta} \left[ s\theta - \lambda_S(\theta) \right].
\eeq
Since $\widehat{\theta}$ is the maximizer in \eqref{eq:theta_max},
\[
    \lambda_S'(\widehat{\theta}) = s \Rightarrow s = \Exp|V_{\widehat{\theta}}|^2.
\]
So, $V_{\widehat{\theta}}$ is the maximum entropy distribution with $E|V|^2=s$.
Also, the entropy of PMF of $V_{\widehat{\theta}}$ in \eqref{eq:Vthetapmf} is,
\begin{align}
    H_{\rm max}(s) &= H(V_{\widehat{\theta}}) = -\Exp \log P(V_{\widehat{\theta}}) \nonumber \\ 
    &= \log|A| + \log Z(\widehat{\theta}) - \widehat{\theta}\Exp |V_{\widehat{\theta}}|^2
    \nonumber \\
    & = \log|A| + \log \lambda_S(\widehat{\theta}) - \widehat{\theta}s \nonumber \\
    & = \log|A| - I_S(s).
\end{align}
\end{proof}

We now proceed to the proof of Theorem~\ref{thm:rate_upper}.
There are $|A|^N$ sequences in the set $A^N$.  So,
if we let $\xbf_N$ be the random vector uniformly generated on $A^N$,
we have that the expected cardinality of the set $G_N(\Vbf,\epsilon)$ in
\eqref{eq:Fneps} is,
\[
    \Exp|G_N(\Vbf,\epsilon)| = |A|^N \Pr(\phi(\Vbf\xbf_N) \in [\sbf-\epsilon,\sbf]),
\]
where the probability is taken over the random vector $\xbf_N$
and the matrix $\Vbf$.
Hence, the rate upper bound in \eqref{eq:rate_upper} is 
\beq \label{eq:rate_upper1}
    \overline{R} = \ln|A| + 
    \lim_{\epsilon \rightarrow 0} \lim_{N \rightarrow \infty} \frac{1}{N} \ln \Pr( \phi(\Vbf\xbf_N) \in [\sbf-\epsilon,\sbf]).
\eeq
So, we need to compute a tail probability.  This is a standard large deviations calculation.
Define the random variable,
\beq \label{eq:Sndef}
    S_N := \frac{1}{N}\|\xbf_N\|^2, 
\eeq
which represents the per sample total energy in the vector
$\xbf_N$.  Also, let $\ubf_N$ be the unit vector,
\beq \label{eq:undef}
    \ubf_N = \frac{1}{\|\xbf_N\|}\Vbf\xbf_N = 
    \frac{1}{\|\Vbf\xbf_N\|}\Vbf\xbf_N.
\eeq
Since $\Vbf$ is Haar distributed on the unitaries, $\ubf_N$ is
uniformly distributed on the sphere of radius one and independent
of $\xbf_N$.  Also, let
\beq \label{eq:nu_Nm}
    \nu_{N,m} := \frac{1}{N} \sum_{k=0}^{N-1} \indic{a_k=m} |u_k|^2,
\eeq
which is the fraction of the energy of $\ubf$ in sub-band $m$.
With these definitions, if $\rbf=\Vbf\xbf$
the sub-band energies in \eqref{eq:sm_nolim} is
given by,
\begin{align}
    \phi_m(\rbf) &:= \frac{1}{N}\sum_{k=0}^{N-1} |r_k|^2\indic{a_k=m} \nonumber \\
        &\stackrel{(a)}{=} \frac{1}{N}\|\xbf\|^2 \sum_{k=0}^{N-1} |u_k|^2\indic{a_k=m}  
        \stackrel{(b)}{=} S_N \nu_{N,m}
\end{align}
where (a) holds since $\rbf = \Vbf\xbf = \|\xbf\|^2\ubf$,
and (b) holds from the definition of $S_N$ in $\nu_{N,m}$ in \eqref{eq:Sndef} and \eqref{eq:nu_Nm}.
So, $\xbf_N \in G_N(\Vbf,\epsilon)$ if and only if,
\beq \label{eq:Snu_con}
     S_N \nu_{N,m}  \in [s_m-\epsilon, s_m]
\eeq
for all $m$.  
Therefore, if we define the set, 
\beq \label{eq:Gdef}
    G_\epsilon := \left\{ (s,\nubf) ~|~ s\nu_m \in  [s_m-\epsilon, s_m] \right\},
\eeq
the constraint \eqref{eq:Snu_con} can be written as $(S_N,\nubf_N) \in G$,
and the rate upper bound \eqref{eq:rate_upper1} is given by,
\beq \label{eq:rate_upper2}
    \overline{R} = \ln|A| + 
    \lim_{\epsilon \rightarrow 0} \lim_{N \rightarrow \infty} \frac{1}{N} \ln \Pr\left( (S_N,\nubf_N) \in G_\epsilon \right).
\eeq
We will calculate the probability  using large deviations.

First, since $S_N$ in \eqref{eq:Sndef} is given by,
\[
    S_N = \frac{1}{N} \sum_{n=0}^{N-1} |x_n|^2,
\]
which is an empirical average of i.i.d.\ random variables with distribution $S$, the random variable $|x|^2$
where $x$ is uniformly distributed on the DAC constellation points $A$.
By Cramer's theorem, it satisfies the large deviations principle (LDP) 
with rate function, $I_S(s)$ \cite{dembo2010large_deviations}.

Also, each component $\nu_{N,m}$ is the energy fraction of the
projection of an $N$-dimensional complex unit vector 
onto a sub-space of dimension $N_m$ with $N_m = \delta_m N$.  
Thus, $\nubf_N$ has the Dirchelet distribution with probability density,
\[
    p(\nubf_N) = \frac{1}{B(\alphabf)}\prod_{m=1}^M 
    \nu_{N,m}^{\alpha_m - 1},
\]
where $\alphabf$ is the vector with coefficients,
\[
    \alpha_m = N\delta_m,
\]
and
\[
    B(\alphabf) = \frac{\prod_m \Gamma(\alpha_m)}{\Gamma(\sum_m \alpha_m)}.
\]
Using Sterling's approximation for large $N$, the density
is approximately given by,
\[
     p(\nubf_N) \approx \exp\left[ -N D(\deltabf\|\nubf) \right].
\]
Therefore, $\nubf_N$ satisfies the LDP with rate function
\[
    I_{\nubf}(\nubf) = D(\deltabf\|\nubf).
\]

Since $S_N$ and $\nubf_N$ are independent, they have a rate $I_S(s) + I_{\nubf}(\nubf)$.
By the property of the rate function,
\begin{align}
    \MoveEqLeft \lim_{N \rightarrow \infty} \frac{1}{N} \ln \Pr\left( (S_N,\nubf_N) \in G_\epsilon \right) 
    \nonumber \\
    &= -\inf_{s,\nubf \in G_\epsilon} \left[  I_S(s) + D(\deltabf\|\nubf) \right].
\end{align}
Using the definition of \eqref{eq:Gdef} and the fact that $I_S(s)$ is continuous,
we obtain
\begin{align}
    \MoveEqLeft \lim_{\epsilon \rightarrow 0} \lim_{N \rightarrow \infty} 
        \frac{1}{N} \ln \Pr\left( (S_N,\nubf_N) \in G_\epsilon \right) 
    \nonumber \\
    &= -\inf_{s,\nubf \in G_0} \left[  I_S(s) + D(\deltabf\|\nubf) \right].
\end{align}
But, taking $\epsilon = 0$ in \eqref{eq:Gdef}, we see that $G_0$ is the set
\[
     G_\epsilon := \left\{ (S,\nubf) ~|~ s\nu_m = s_m ~\forall m \right\}.
\]
Since $\sum_m \nu_m = 1$, the only point in $G_0$ are the $s=s_{\rm tot}=\sum_m s_m$
and $\nu_m = s_m/s_{\rm tot}$.  Therefore,
\begin{align*}
   \MoveEqLeft \lim_{\epsilon \rightarrow 0} \lim_{N \rightarrow \infty} 
        \frac{1}{N} \ln \Pr\left( (S_N,\nubf_N) \in G_\epsilon \right) \nonumber \\
    &= -I_S(s_{\rm tot}) - D(\deltabf\|\nubf).
\end{align*}
Substituting this into \eqref{eq:rate_upper2},
\beq \label{eq:rate_upper3}
    \overline{R} = \ln|A| -  I_S(s_{\rm tot}) - D(\deltabf\|\nubf).
\eeq
From Lemma~\ref{lem:Hmax}, this can be re-written as,
\beq \label{eq:rate_upper4}
    \overline{R} = H_{\rm max}(s_{\rm tot}) - D(\deltabf\|\nubf).
\eeq
}

\end{document}